\newcommand{\blind}{1}
\DeclareBoldMathCommand{\ba}{a}
\DeclareBoldMathCommand{\bb}{b}
\DeclareBoldMathCommand{\be}{e}
\DeclareBoldMathCommand{\br}{r}
\DeclareBoldMathCommand{\bg}{g}
\DeclareBoldMathCommand{\bt}{t}
\DeclareBoldMathCommand{\bu}{u}
\DeclareBoldMathCommand{\bv}{v}
\DeclareBoldMathCommand{\be}{e}
\DeclareBoldMathCommand{\bw}{w}
\DeclareBoldMathCommand{\bx}{x}
\DeclareBoldMathCommand{\bz}{z}
\DeclareBoldMathCommand{\bm}{m}
\DeclareBoldMathCommand{\by}{y}
\DeclareBoldMathCommand{\bh}{h}
\DeclareBoldMathCommand{\bA}{A}
\DeclareBoldMathCommand{\bD}{D}
\DeclareBoldMathCommand{\bY}{Y}
\DeclareBoldMathCommand{\bX}{X}
\DeclareBoldMathCommand{\bZ}{Z}
\DeclareBoldMathCommand{\bM}{M}
\DeclareBoldMathCommand{\bP}{P}
\DeclareBoldMathCommand{\bI}{I}
\DeclareBoldMathCommand{\bT}{T}
\DeclareBoldMathCommand{\bU}{U}
\DeclareBoldMathCommand{\bS}{S}
\DeclareBoldMathCommand{\bQ}{Q}
\DeclareBoldMathCommand{\bW}{W}
\DeclareBoldMathCommand{\bE}{E}
\DeclareBoldMathCommand{\bV}{V}
\DeclareBoldMathCommand{\bJ}{J}
\DeclareBoldMathCommand{\bL}{L}
\DeclareBoldMathCommand{\bR}{R}
\DeclareBoldMathCommand{\bzero}{0}
\DeclareBoldMathCommand{\bone}{1}
\DeclareBoldMathCommand{\balpha}{\alpha}
\DeclareBoldMathCommand{\bxi}{\xi}
\DeclareBoldMathCommand{\bbeta}{\beta}
\DeclareBoldMathCommand{\brho}{\rho}
\DeclareBoldMathCommand{\btau}{\tau}
\DeclareBoldMathCommand{\beeta}{\eta}
\DeclareBoldMathCommand{\btheta}{\theta}
\DeclareBoldMathCommand{\bdelta}{\delta}
\DeclareBoldMathCommand{\bPhi}{\Phi}
\DeclareBoldMathCommand{\bkappa}{\kappa}
\DeclareBoldMathCommand{\bgamma}{\gamma}
\DeclareBoldMathCommand{\bSigma}{\Sigma}
\DeclareBoldMathCommand{\bmathcaly}{\mathcal{Y}}
\DeclareBoldMathCommand{\bTheta}{\Theta}
\DeclareBoldMathCommand{\bmu}{\mu}
\DeclareBoldMathCommand{\bpsi}{\psi}
\DeclareBoldMathCommand{\bzeta}{\zeta}
\DeclareBoldMathCommand{\bvarepsilon}{\varepsilon}
\DeclareBoldMathCommand{\bepsilon}{\epsilon}
\DeclareBoldMathCommand{\rn}{\frac{\phi}{\phi+\|\tilde{\bZ}_n\|_2^2}}
\def\bTheta{\boldsymbol{\Theta}}
\def\bSigma{\boldsymbol{\Sigma}}
\DeclareMathOperator*{\argmin}{argmin}
\newcommand{\Rp}{\mathbb{R}_{\geq 0}}
\newtheorem{theorem}{Theorem}[section]
\newtheorem{lemma}[theorem]{Lemma}
\newtheorem{corollary}[theorem]{Corollary}
\newtheorem{example}[theorem]{Example}
\begin{document}

\def\spacingset#1{\renewcommand{\baselinestretch}%
{#1}\small\normalsize} \spacingset{1}


\if1\blind
{
  \title{\bf Anytime-Valid Continuous-Time Confidence Processes for Inhomogeneous Poisson Processes}
  \author{Michael Lindon\footnote{Netflix}, Nathan Kallus\footnote{Netflix, Cornell}}
  \maketitle
} \fi

\if0\blind
{
  \bigskip
  \bigskip
  \bigskip
  \begin{center}
    {\LARGE\bf Sub-Poisson Point Process Intensity Estimation}
\end{center}
  \medskip
} \fi

\bigskip
\begin{abstract}
Motivated by monitoring the arrival of incoming adverse events such as customer support calls or crash reports from users exposed to an experimental product change, we consider sequential hypothesis testing of continuous-time inhomogeneous Poisson point processes. Specifically, we provide an interval-valued confidence process $C^\alpha(t)$ over continuous time $t$ for the cumulative arrival rate $\Lambda(t) = \int_0^t \lambda(s) \mathrm{d}s$ with a continuous-time anytime-valid coverage guarantee $\mathbb{P}[\Lambda(t) \in C^\alpha(t) \, \forall t >0] \geq 1-\alpha$. We extend our results to compare two independent arrival processes by constructing multivariate confidence processes and a closed-form $e$-process for testing the equality of rates with a time-uniform Type-I error guarantee at a nominal $\alpha$. We characterize the asymptotic growth rate of the proposed $e$-process under the alternative and show that it has power 1 when the average rates of the two Poisson process differ in the limit. We also observe a complementary relationship between our multivariate confidence process and the universal inference $e$-process for testing composite null hypotheses.
\end{abstract}

\noindent%
{\it Keywords:} Anytime-Valid Inference, Sequential Hypothesis Testing, Confidence Sequences, Point Processes, Arrival Processes, A/B Testing
\vfill
\footnotesize
\qquad 
\\

\normalsize
\newpage
\spacingset{1.9} 
\section{Introduction}
Point processes are frequently encountered across many industry applications dealing with random arrival times of events. Consider, for example, the arrival of telephone calls at customer support centers \citep{brown2005statistical} or the epochs of failures in software reliability testing \citep{yang}. Of particular interest is how a proposed change, such as a software update, might impact the volume of arrivals, which is often explored through an A/B test before deciding whether to deploy the change. We are here motivated by the need to detect as soon as possible if the proposed change does increase the volume of arrivals of adverse events, rather than, for example, analyzing the difference in event volumes at the end of a pre-specified window, which may be either too short or too long.

Complicating this question is that instantaneous arrival rates, with or without the change, can generally fluctuate significantly over time due to usage patterns, external factors, and following the onset of service outages. The inhomogeneous Poisson point process (defined in section \ref{sec:review}) is a popular model for describing such phenomena \citep{kingman}. A time-varying intensity function $\lambda(t)$ allows the instantaneous arrival rate to vary flexibly throughout time, while the memoryless and independent increments properties make inference and computation tractable.

In this paper, we study anytime-valid inference on the cumulative arrival rate of inhomogeneous Poisson point processes up to any one time point, with guarantees that hold uniformly over (continuous) time. This allows the continuous monitoring of such event streams and testing hypotheses such as about differences in cumulative arrival rates at all times, so that, for example, if a difference is detected action can be taken right away.
The ability to test hypotheses repeatedly through time, while maintaining strict Type I error guarantees, necessitates careful sequential analysis.

In section \ref{sec:one_process} we construct a confidence process $C^\alpha(t)$, a continuous-time analogue of the discrete-time confidence sequence \citep{darling67, howard}, for a single inhomogeneous Poisson process that is guaranteed to cover the cumulative arrival rate $\Lambda(t)=\int_0^t\lambda(s)\mathrm ds$ simultaneously at all times $t>0$ with probability at least $1-\alpha$. We further show that $C^\alpha(t)$ is powered to detect deviations in average arrival, namely that if $\Lambda(t)/t\to\bar\lambda$ then any $\Lambda_1(t)$ with $\Lambda_1(t)/t\to\bar\lambda_1\neq\bar\lambda$ is eventually excluded with probability one.
In section \ref{sec:two_processes}, we extend our methods and results to the case of two independent processes with intensity functions $\lambda^A(t),\lambda^B(t)$. We consider a confidence process taking set values on the plane that contain $(\Lambda^A(t), \Lambda^B(t))$ jointly at all times $t$ with probability at least $1-\alpha$. This allows the intensity measure, or equivalently the time-averaged instantaneous rates $(\Lambda^A(t) / t, \Lambda^B(t) / t)$ to be estimated and monitored at all times, taking appropriate actions when necessary.

In section \ref{sec:testing} we then turn to testing equality of arrival processes, that is, the null hypothesis $H_0: \lambda^A(t) = \lambda^B(t)$ for all $t > 0$, for which we provide a continuous-time $e$-process.
That is, we construct a nonnegative process $E(t)$ such that, under the null hypothesis $H_0$, we have $\mathbb{E}_{H_0}[E(\tau)] = 1$ for any, possibly data-dependant, random stopping time $\tau$ \citep{grunwald, grunwaldramdas}. Rejecting the null at time $\tau = \inf\{t>0 : E(t) > \alpha^{-1}\}$ then has the Type-I error guarantee $\mathbb{P}_{H_0}[\tau<\infty] \leq \alpha$.
Our $e$-processes are constructed directly from our joint confidence processes on $(\Lambda^A(t), \Lambda^B(t))$, providing a unified approach to inference. We also observe that our construction coincides with the universal $e$-process of \citet{Wasserman_2020} applied to our setting. We characterize the asymptotic growth rate of the proposed $e$-process when $\Lambda^A(t)/t\to\bar\lambda^A$, $\Lambda^B(t)/t\to\bar\lambda^B$, and we show the rate depends on a particular divergence between $\bar\lambda^A$ and $\bar\lambda^B$. As a consequence, this establishes that our test is powered to detect violations of $H_0$ when $\bar\lambda^A\neq\bar\lambda^B$, that is, $\mathbb P(\tau<\infty)=1$ in such cases.
We further compare the $e$-process with two alternative procedures in section \ref{sec:comparisons}: one based on the Bernoulli draw of the next event coming from $A$ or $B$ and one based on a normal approximation of the difference in counting processes.

We demonstrate our methodology in section \ref{sec:application} with an application regarding detecting increases in customer call center volume among two treatment arms of the A/B test, where the desire is to abandon the experiment as quickly as possible if any increase is detected.

\subsection{Related Literature}
Many classical statistical tests fail to provide \textit{time-uniform} Type-I error and coverage guarantees, providing guarantees at a fixed sample size only, because the experiments that motivated their development, such as agriculture, yielded results simultaneously \citep{anscombe}. In modern experiments, however, outcomes are often observed sequentially. Repeated hypothesis testing with such methods on an accumulating set of data causes the cumulative Type-I error to grow \citep{armitage}. To facilitate continuous monitoring of hypotheses with the ability to perform early stopping, we need sequential analysis.

Early works date back to \citet{wald} and the sequential probability ratio test. \citet{darling67, robbins2} introduced confidence sequences for continuous monitoring of estimands. \citet{howard} provides nonparametric nonasymptotic confidence sequences under tail conditions. Confidence sequences for quantiles are provided by \citet{sequentialquantiles}. Confidence sequences to heavy tailed and robust mean estimation are available in \citet{Wang_2023, huberrobust}. \citet{anytimecounts} provide confidence sequences for multinomial parameters, while \citet{samplingreplacement} provide confidence sequences in sampling without replacement settings. \citet{linear_models} provide confidence sequences for linear model coefficients and average treatment effects in randomized experiments. Time-uniform versions of the central limit theorem via strong approximations yielding asymptotic confidence sequences can be found in \citet{asymptoticcs, bibaut2022}.

Confidence sequences are closely related to $e$-processes \citep{grunwald, grunwaldramdas}. The ability to stop early has been valuable in medical survival analysis \citep{logrank} and contingency table testing \citep{turner2021generic, turner2023safe}, but also in online A/B testing \citep{johari, johari2}. \citet{rapidregression} use confidence sequences to monitor quantiles of performance metrics in software A/B experiments. \citet{ham2022} construct design-based confidence sequences on sample causal estimands to de-risk experimental product changes.

In contrast to earlier works where typically a continuous or discrete outcome is observed, and the goal of inference is to construct a confidence sequence on the mean or average treatment effect, our setting is different in the sense that we only observe the timestamp at which an event occurs, and the goal of inference is to construct a confidence process on the instantaneous rate of events.

\section{Inhomogeneous Poisson Point Processes}
\label{sec:review}

We start by defining the inhomogeneous Poisson process.
For a countable set $\mathcal{T} \subseteq \mathbb{R}$ and a Borel set $B \in \mathcal{B}$, let $N(B) \coloneqq \# \mathcal{T} \cap B$ denote the number of points in $B$. An inhomogeneous Poisson point process over $\mathbb{R}$ is a random countable set $\mathcal{T} \in \mathbb{R}$, defined by a nonnegative locally-integrable intensity function $\lambda: \mathbb{R} \rightarrow \Rp$, which we denote $\mathcal{T} \sim \mathcal{P}(\lambda)$, with the following three properties
\begin{enumerate}
    \item $\mathbb{E}[N(B)] =\Lambda(B)= \int_{\mathcal{B}}\lambda(s)ds$ for any Borel set $B \in \mathcal{B}$,
    \item For any finite collection of disjoint sets $B_1, B_2, \dots, B_K$, 
    \begin{enumerate}
    \item the random counts $N(B_1), N(B_2), \dots, N(B_K)$ are independent,
    \item $N(B_i) \sim \text{Poisson}\left(\Lambda(B_i)\right)$.
\end{enumerate}
\end{enumerate}
$\Lambda(B)$ is named the \textit{intensity measure} and is the expected number of points belonging to the set $B$. In this work, we consider observing an inhomogeneous Poisson point process continuously from time $t_0 = 0$ and restrict our focus to intervals of the form $[0,t]$, where $t$ denotes the current time. For simplicity we use the shorthand $N(t) = \# \mathcal{T}\cap[0, t]$ and $\Lambda(t) = \int_0^t\lambda(s)ds$ to denote the counting process and intensity measure respectively over the current observation window $[0, t]$. Also, for any $\lambda_\circ^\diamond(t)$ with any subscript or superscript we implicitly also define the corresponding $\Lambda_\circ^\diamond(t) = \int_0^t\lambda_\circ^\diamond(s)ds$. We write $\mathcal{F}_t = \sigma(N(s) : 0 \leq s \leq t)$ to denote the canonical filtration.

Let $\mathcal{S} = \{[a, b]:  a,b \in \Rp, b > a\}$ denote a set of intervals in $\Rp$. Our first result is a confidence process $C^\alpha : \Rp \rightarrow \mathcal{S}$ adapted to the filtration $(\mathcal F_t)_{t\in\Rp}$, meaning that $C^\alpha(t)$ is measurable with respect to $\mathcal F_t$, such that
\begin{equation}
    \mathbb{P}\left[\Lambda(t) \in C^\alpha(t)  \,\,\, \text{for all} \,\,\, t > 0 \right] \geq 1-\alpha.
\end{equation}
Note that $C^\alpha(t)$ is random as it depends on the observed data, but we suppress this dependence for ease of notation.
Writing $C^\alpha(t) = [l(t), u(t)]$, then a second confidence process $\bar{C}^\alpha(t):=[l(t) / t, u(t)/ t]$ can be defined that covers $\frac{1}{t}\Lambda(t) = \frac{1}{t}\int_0^t \lambda(s)ds$ with the same time-uniform coverage guarantee. This enables the average intensity to be continuously estimated as the point process continues to be observed.

Our second result is a sequential test of the hypothesis $\lambda^B(t) = \lambda^A(t)$. The test is sequential in the sense that at any current time $t$, the hypothesis can be tested given the observations of $\mathcal{T}^A \cap [0,t]$ and $\mathcal{T}^B \cap [0,t]$. Moreover, the test can be applied as frequently as desired without sacrificing the desired Type-I error guarantee.
\section{Confidence Processes for the Intensity Measure $\Lambda(t)$}
\label{sec:one_process}
At time $t$, the intersection of $\mathcal{T}\cap [0,t]$ is observed by the experimenter. Given this observation, the likelihood function for $\lambda(t)$ is 
\begin{equation}
    \mathcal{L}_t(\lambda) \propto e^{-\int_0^t \lambda(s) ds}\prod_{x \in \mathcal{T}\cap [0,t]} \lambda(x) .
\end{equation}
Consider the test of a simple null $\lambda_0$ vs a simple alternative $\lambda_1$. The likelihood ratio at time $t$ is
\begin{equation}
\label{eq:likelihood_ratio}
    \frac{\mathcal{L}_t(\lambda_1)}{\mathcal{L}_t(\lambda_0)} = e^{-\int_0^t \lambda_1(s)-\lambda_0(s) ds}\prod_{x \in \mathcal{T}\cap [0,t]} \frac{\lambda_1(x)}{\lambda_0(x)} .
\end{equation}
Our first result states that the likelihood ratio in equation \eqref{eq:likelihood_ratio} is a continuous time non-negative supermartingale under the null hypothesis.
\begin{theorem}
\label{thm:simplemartingale}
    Suppose $\mathcal{T} \sim \mathcal{P}(\lambda_0)$, then for any $t >0$, $\delta > 0$, and $\lambda_1 : \mathbb{R}\rightarrow \mathbb{R}_{\geq 0}$
    \begin{equation}
        \mathbb{E}\left[ \frac{\mathcal{L}_{t+\delta}(\lambda_1)}{\mathcal{L}_{t+\delta}(\lambda_0)} \mid \mathcal{F}_t\right] = \frac{\mathcal{L}_t(\lambda_1)}{\mathcal{L}_t(\lambda_0)}.
    \end{equation}
The likelihood ratio process satisfies the following time-uniform bound
    \begin{equation}
    \label{eq:simpleville}
        \mathbb{P}\left[\sup_{t\in\Rp} \frac{\mathcal{L}_t(\lambda_1)}{\mathcal{L}_t(\lambda_0)} \geq \alpha^{-1}\right] = \alpha
    \end{equation}
\end{theorem}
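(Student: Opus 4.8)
The plan is to prove the one‑increment identity $\mathbb{E}[M_{t+\delta}\mid\mathcal F_t]=M_t$ first (the full martingale property over $0\le t<t+\delta$ then follows, e.g.\ by the tower property) and to obtain \eqref{eq:simpleville} from it through a maximal inequality. Write $M_t:=\mathcal L_t(\lambda_1)/\mathcal L_t(\lambda_0)$, the ratio in \eqref{eq:likelihood_ratio}; it is adapted to $(\mathcal F_t)$, with $M_0=1$ (the integral and product are empty at $t=0$ and the proportionality constants cancel). Factor $M_{t+\delta}=M_t\,R$ where $R:=\exp\{-\int_t^{t+\delta}(\lambda_1-\lambda_0)\,ds\}\prod_{x\in\mathcal T\cap(t,t+\delta]}\lambda_1(x)/\lambda_0(x)$; since $M_t$ is $\mathcal F_t$-measurable it suffices to show $\mathbb E[R\mid\mathcal F_t]=1$. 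The restriction of $\mathcal T$ to $(t,t+\delta]$ is independent of $\mathcal F_t$ by the disjoint‑set independence in the definition of $\mathcal P(\lambda_0)$, so $\mathbb E[R\mid\mathcal F_t]=\mathbb E[R]$; the deterministic exponential pulls out, leaving $\mathbb E[\prod_{x\in\mathcal T\cap(t,t+\delta]}\lambda_1(x)/\lambda_0(x)]$.

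I would evaluate this in one of two equivalent ways. Conditioning on $N((t,t+\delta])\sim\mathrm{Poisson}(\mu)$ with $\mu=\Lambda_0((t,t+\delta])$ and using that, given the count $n$, the points are i.i.d.\ on $(t,t+\delta]$ with density $\lambda_0/\mu$, the expectation becomes $\sum_{n\ge0}e^{-\mu}\frac{\mu^n}{n!}\big(\Lambda_1((t,t+\delta])/\mu\big)^n=\exp\{\Lambda_1((t,t+\delta])-\Lambda_0((t,t+\delta])\}$; alternatively, the probability generating functional of a Poisson process, $\mathbb E[\prod_x f(x)]=\exp\{\int(f(x)-1)\lambda_0(x)\,dx\}$ with $f=\lambda_1/\lambda_0$, gives the same value at once. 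Either way it cancels the deterministic exponential, so $\mathbb E[R\mid\mathcal F_t]=1$; equivalently, one may observe that $M_t$ is the Radon--Nikodym density of $\mathbb P_{\lambda_1}$ with respect to $\mathbb P_{\lambda_0}$ on $\mathcal F_t$, and such density processes are $\mathbb P_{\lambda_0}$-martingales by construction. One caveat I would flag: the cancellation uses $\int_A\lambda_1=\int_{A\cap\{\lambda_0>0\}}\lambda_1$, i.e.\ $\{\lambda_0=0\}\subseteq\{\lambda_1=0\}$ up to null sets; without it one only gets $\mathbb E[R\mid\mathcal F_t]\le1$, i.e.\ the nonnegative supermartingale property asserted just before the theorem — which is all the downstream confidence‑process construction actually needs.

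For \eqref{eq:simpleville}, the robust half is the inequality: $M$ is a nonnegative right‑continuous supermartingale with $M_0=1$ over a filtration that may be taken right‑continuous, so the continuous‑time Ville / Doob maximal inequality gives $\mathbb P[\sup_{t\ge0}M_t\ge a]\le M_0/a$, hence $\mathbb P[\sup_t M_t\ge\alpha^{-1}]\le\alpha$ with $a=\alpha^{-1}$. I expect the genuinely delicate point — the main obstacle — to be the matching \emph{equality}: optional stopping at $\tau_a=\inf\{t:M_t\ge a\}$ gives $\mathbb E[M_{\tau_a\wedge t}]=1$, and letting $t\to\infty$ (the process is bounded by $a$ before $\tau_a$, so the truncated term is dominated) yields $1=\mathbb E[M_{\tau_a}\mathbf{1}\{\tau_a<\infty\}]+\mathbb E[M_\infty\mathbf{1}\{\tau_a=\infty\}]$, so $\mathbb P[\tau_a<\infty]=1/a$ requires both $M_\infty=0$ $\mathbb P_{\lambda_0}$-a.s.\ and $M_{\tau_a}=a$ on $\{\tau_a<\infty\}$. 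The first needs mutual singularity of the two Poisson laws on $\mathcal F_\infty$ (it can fail, e.g., when $\lambda_1$ and $\lambda_0$ differ only on a bounded set), and the second needs an overshoot‑free upcrossing, which holds exactly when $M$ increases continuously and jumps downward, i.e.\ when $\lambda_1\le\lambda_0$; for general $\lambda_1$ some jump factors $\lambda_1(x)/\lambda_0(x)$ exceed $1$, the crossing overshoots with positive probability, and one only has the strict inequality. So I would establish \eqref{eq:simpleville} as ``$\le$'' in general and recover the equality under the singularity and monotonicity conditions that the later one‑sided constructions make natural; the supermartingale half, by contrast, is routine once the independent‑increments reduction and the Poisson computation above are in place.
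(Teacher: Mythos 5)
Your martingale argument is essentially the paper's: the same factorization of $\mathcal{L}_{t+\delta}/\mathcal{L}_{t+\delta}$ into the time-$t$ ratio times an increment factor, the same appeal to independence over disjoint intervals, and the probability generating functional you invoke is exactly Campbell's theorem (lemma \ref{lemma:campbell}), which is what the paper uses to show the increment has conditional expectation one; your alternative evaluation by conditioning on the Poisson count and using the i.i.d.\ location representation is an elementary equivalent. Your caveat that exact cancellation needs $\{\lambda_0=0\}\subseteq\{\lambda_1=0\}$ (otherwise one only gets a supermartingale) is a genuine refinement the paper passes over silently. Where you diverge is on \eqref{eq:simpleville}: the paper offers no argument beyond citing Ville's inequality, which only delivers $\mathbb{P}[\sup_t \mathcal{L}_t(\lambda_1)/\mathcal{L}_t(\lambda_0)\geq\alpha^{-1}]\leq\alpha$, yet the statement asserts equality for arbitrary $\lambda_1$. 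Your analysis of why equality is delicate is correct and in fact shows the displayed equality is false in the stated generality --- most simply, $\lambda_1=\lambda_0$ gives a constant ratio equal to one and probability zero, and more generally upward jumps $\lambda_1(x)/\lambda_0(x)>1$ cause overshoot, while $M_\infty>0$ with positive probability (e.g.\ $\lambda_1=\lambda_0$ off a bounded set) breaks the optional-stopping identity. So your decision to prove the bound as ``$\leq$'' and isolate the extra conditions (a.s.\ vanishing of the ratio and overshoot-free crossing) under which it sharpens to equality is not a gap in your proof but a correction to the theorem as stated; the coverage and Type-I error guarantees downstream only require the inequality, so nothing later in the paper is affected.
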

The likelihood ratio process is a continuous time nonnegative supermartingale under the null hypothesis, regardless of the chosen alternative, according to theorem \ref{thm:simplemartingale}. The time-uniform bound in equation \eqref{eq:simpleville} follows immediately from Ville's inequality for continuous time nonnnegative supermartingales \citep{ville}.

Consider a proportional hazards alternative $\lambda_1(t) = e^\theta \lambda_0(t)$. The likelihood ratio can now be written as
\begin{equation}
\label{eq:likelihood_ratio_theta}
    \frac{\mathcal{L}_t(\lambda_1)}{\mathcal{L}_t(\lambda_0)} = e^{\theta N(t) - (e^{\theta}-1)\Lambda_0(t)}.
\end{equation}
The motivation for working with a proportional hazards alternative is the simplification it brings to the likelihood ratio. The data now enters the likelihood ratio only through $N(t)$, and similarly $\lambda_0$ now enters the likelihood ratio only through $\Lambda_0(t)$. The alternative in equation \eqref{eq:likelihood_ratio_theta} is simple, corresponding to a singleton $\theta$. In theorem \ref{thm:single_hypothesis_test} we will also show that this choice does not in fact limit us to detecting only such alternatives.

A test for a composite alternative can be obtained by mixing the likelihood ratio with respect to a mixture distribution on $\theta$. In practice, distributions which are conjugate to the likelihood are preferred as they yield closed-form expressions for the likelihood ratio mixture. A $\text{logGamma}(\phi, \phi)$ mixture over $\theta$, which in turn implies a $\text{Gamma}(\phi, \phi)$ mixture over $e^{\theta}$, is conjugate to the likelihood ratio in equation \eqref{eq:likelihood_ratio_theta}, resulting in a convenient closed form expression \citep{howard}. This mixture distribution has mean 1 and variance $1/\phi$. It is therefore centered on the null hypothesis and $\phi$ acts as a precision parameter to concentrate the mixture toward local alternatives.
\begin{theorem}
\label{thm:compositemartingale}
    Assume $\mathcal{T} \sim \mathcal{P}(\lambda_0)$ and for any fixed $\phi > 0$ let $\Pi$ denote a $\text{logGamma}(\phi, \phi)$ distribution. Define    
    \begin{equation}
\label{eq:composite_martingale}
            M^{\Lambda_0}(t) := \int   \frac{\mathcal{L}_t(e^\theta\lambda_0)}{\mathcal{L}_t(\lambda_0)} d\Pi(\theta)
            =M(N(t),\Lambda_0(t);\phi),
\end{equation}
where
$$    M(n,L;\phi)=\frac{\phi^\phi}{(\phi + L)^{\phi+n}}\frac{\Gamma(\phi+n)}{\Gamma(\phi)}e^{L}.
$$
The process $M^{\Lambda_0}(t)$ is a continuous-time nonnegative supermartingale.

Moreover,
\begin{equation}
    \label{eq:simpleville2}\mathbb{P}[\exists t> 0 : M^{\Lambda_0}(t) \geq \alpha^{-1}] = \alpha.\end{equation}
\end{theorem}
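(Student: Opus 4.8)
The plan is to prove the theorem in three steps: evaluate the mixture integral in closed form, lift the single-alternative martingale property of Theorem~\ref{thm:simplemartingale} to the $\Pi$-mixture, and then apply Ville's inequality in continuous time. For the closed form, substitute the proportional-hazards likelihood ratio \eqref{eq:likelihood_ratio_theta}, $\mathcal{L}_t(e^\theta\lambda_0)/\mathcal{L}_t(\lambda_0) = e^{\theta N(t) - (e^\theta-1)\Lambda_0(t)}$, into the definition of $M^{\Lambda_0}(t)$ and push $\Pi$ forward under $\theta\mapsto g:=e^\theta$, so that $g$ has the $\mathrm{Gamma}(\phi,\phi)$ density $\frac{\phi^\phi}{\Gamma(\phi)}g^{\phi-1}e^{-\phi g}$ on $g>0$. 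The integral becomes $\frac{\phi^\phi}{\Gamma(\phi)}e^{\Lambda_0(t)}\int_0^\infty g^{\phi+N(t)-1}e^{-(\phi+\Lambda_0(t))g}\,dg$, a standard Gamma integral equal to $\frac{\phi^\phi}{\Gamma(\phi)}e^{\Lambda_0(t)}\Gamma(\phi+N(t))(\phi+\Lambda_0(t))^{-\phi-N(t)} = M(N(t),\Lambda_0(t);\phi)$; in particular $M^{\Lambda_0}(0) = M(0,0;\phi) = 1$.

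For the supermartingale property, nonnegativity is immediate because the integrand is nonnegative and $\Pi$ is a probability measure, and $t\mapsto M^{\Lambda_0}(t)$ is adapted since it is a function of the $\mathcal{F}_t$-measurable count $N(t)$ and of the deterministic $\Lambda_0(t)$. For the defining inequality, fix $0<t<t+\delta$ and apply the conditional version of the Fubini--Tonelli theorem (valid here by nonnegativity together with joint measurability of $(\theta,\omega)\mapsto \mathcal{L}_{t+\delta}(e^\theta\lambda_0)(\omega)/\mathcal{L}_{t+\delta}(\lambda_0)(\omega)$) to write $\mathbb{E}[M^{\Lambda_0}(t+\delta)\mid\mathcal{F}_t] = \int \mathbb{E}[\mathcal{L}_{t+\delta}(e^\theta\lambda_0)/\mathcal{L}_{t+\delta}(\lambda_0)\mid\mathcal{F}_t]\,d\Pi(\theta)$; by Theorem~\ref{thm:simplemartingale}, applied pointwise in $\theta$, the inner conditional expectation equals $\mathcal{L}_t(e^\theta\lambda_0)/\mathcal{L}_t(\lambda_0)$, so the right-hand side is $M^{\Lambda_0}(t)$ and $M^{\Lambda_0}$ is in fact a martingale (hence a supermartingale); the same interchange gives $\mathbb{E}[M^{\Lambda_0}(t)]=1$. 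To invoke continuous-time Ville one needs a càdlàg version: since $L\mapsto M(n,L;\phi)$ is continuous, $\Lambda_0(\cdot)$ is continuous, and $N(\cdot)$ is càdlàg, the paths $t\mapsto M(N(t),\Lambda_0(t);\phi)$ are càdlàg.

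The concluding bound \eqref{eq:simpleville2} then follows from Ville's inequality for nonnegative càdlàg supermartingales \citep{ville}: $\mathbb{P}[\exists t>0:M^{\Lambda_0}(t)\ge\alpha^{-1}] \le \alpha\,M^{\Lambda_0}(0) = \alpha$. For the matching lower bound, I would argue exactly as for \eqref{eq:simpleville} of Theorem~\ref{thm:simplemartingale}: before the first arrival $M^{\Lambda_0}(t) = \frac{\phi^\phi}{(\phi+\Lambda_0(t))^\phi}e^{\Lambda_0(t)}$ is continuous and strictly increasing to $\infty$ in $\Lambda_0(t)$, so the threshold $\alpha^{-1}$ is crossed continuously on a positive-probability event, and combining the supermartingale convergence theorem (which gives $M^{\Lambda_0}(\infty)=0$ when $\Lambda_0(\infty)=\infty$) with an optional-stopping argument at the first-passage time pins the probability at exactly $\alpha$.

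The main obstacle is the interchange of $\int d\Pi(\theta)$ with $\mathbb{E}[\,\cdot\mid\mathcal{F}_t]$ in the second step: this is where the ``mixture of martingales is a martingale'' principle actually does the work, and it must be justified by conditional Fubini--Tonelli, requiring joint $(\theta,\omega)$-measurability and a mild integrability/nonnegativity check --- everything else (the Gamma integral, nonnegativity, path regularity, the Ville bound) is routine. A secondary delicate point is the tightness needed for the stated equality $=\alpha$ rather than the clean Ville bound $\le\alpha$, since $M^{\Lambda_0}$ may jump upward at arrivals and hence need not land exactly on $\alpha^{-1}$ at the first crossing.
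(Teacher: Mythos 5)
Your proof follows essentially the same route as the paper's: the same Gamma-integral closed form for the mixture, the same Fubini interchange of $\int d\Pi(\theta)$ with the conditional expectation using Theorem~\ref{thm:simplemartingale} pointwise in $\theta$, and Ville's inequality for the time-uniform bound. The only point of difference is your (correctly flagged) concern about the exact equality in \eqref{eq:simpleville2}: the paper simply asserts it follows from Ville's inequality, which strictly yields only $\le\alpha$ when upward jumps can cause overshoot, so your unresolved first-passage sketch is no weaker than what the paper itself provides and does not constitute a divergence from its argument.
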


Equation~\eqref{eq:simpleville2} again follow from Ville's inequality.
The null hypothesis $H_0: \mathcal{T} \sim \mathcal{P}(\lambda_0)$ can therefore be rejected as soon as $M^{\Lambda_0}(t)$ becomes larger the rejection threshold of $\alpha^{-1}$, while maintaining a time-uniform Type-I error guarantee of $\alpha$. 

Since the numerator of $M^{\Lambda_0}(t)$ only mixes over proportional hazard alternatives, one might worry that the test described in the previous paragraph is only powered to detect such alternatives. The next theorem demonstrates that in fact this test has power 1 against \textit{any} alternative for which the average rate of arrivals differs from the average rate under the null hypothesis.
\begin{theorem}
\label{thm:single_hypothesis_test}
    Let $\lambda(t),\lambda_0(t)$ be given such that $\Lambda(t) / t \rightarrow \bar{\lambda}$, $\Lambda_0(t) / t \rightarrow \bar{\lambda}_0$ and $\lambda(t)$ is bounded over $t>0$.
    Suppose $\mathcal{T} \sim \mathcal{P}(\lambda)$.
    Then, 
    \begin{equation}
        \frac{\log M^{\Lambda_0}(t)}{t} \overset{a.s.}{\rightarrow} \bar{\lambda}\log\frac{\bar{\lambda}}{\bar{\lambda}_0}-(\bar{\lambda}-\bar{\lambda}_0)=D_{KL}(\bar{\lambda}||\bar{\lambda}_0),
    \end{equation}
    where $D_{KL}(\bar{\lambda}||\bar{\lambda}_0)$ is the Kullback-Leibler divergence of a $\text{Poisson}(\bar{\lambda})$ distribution from a $\text{Poisson}(\bar{\lambda}_0)$ distribution.
\end{theorem}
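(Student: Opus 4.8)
The plan is to take the explicit formula $M^{\Lambda_0}(t) = M(N(t),\Lambda_0(t);\phi)$, take logarithms, and identify the leading-order term as $t\to\infty$. Writing out
\[
\log M^{\Lambda_0}(t) = \phi\log\phi - (\phi+N(t))\log(\phi+\Lambda_0(t)) + \log\Gamma(\phi+N(t)) - \log\Gamma(\phi) + \Lambda_0(t),
\]
I would divide by $t$ and examine each term. The key probabilistic input is that $N(t) \sim \text{Poisson}(\Lambda(t))$ with $\Lambda(t)/t \to \bar\lambda$, so that $N(t)/t \overset{a.s.}{\to} \bar\lambda$; this is a strong law for Poisson processes, which I would establish via a standard Borel–Cantelli / Chernoff-bound argument along integer times combined with monotonicity of $N(t)$ to fill in between integers (the boundedness of $\lambda$ ensures the increments over unit intervals are controlled). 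The deterministic input is $\Lambda_0(t)/t \to \bar\lambda_0$, hence $(\phi+\Lambda_0(t))/t \to \bar\lambda_0$ and $\log(\phi+\Lambda_0(t)) = \log t + \log\bar\lambda_0 + o(1)$.

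Next I would handle the term $\log\Gamma(\phi+N(t))$. Using Stirling's approximation, $\log\Gamma(\phi+N(t)) = (N(t))\log N(t) - N(t) + O(\log N(t)) = N(t)\log t + N(t)\log(N(t)/t) - N(t) + O(\log t)$. Dividing by $t$ and using $N(t)/t \to \bar\lambda$ gives $\bar\lambda\log t \cdot \tfrac{1}{t}\cdot t$... more carefully: $\tfrac1t\log\Gamma(\phi+N(t)) \to$ contributions that, when combined with the $-\tfrac{1}{t}(\phi+N(t))\log(\phi+\Lambda_0(t))$ term, must have the $\log t$ pieces cancel. Indeed $-\tfrac{1}{t}(\phi+N(t))\log(\phi+\Lambda_0(t)) = -\tfrac{N(t)}{t}(\log t + \log\bar\lambda_0) + o(1) \to -\bar\lambda\log t/1 \cdot(\ldots)$ — the $\log t$ terms from Stirling and from this term cancel because both carry coefficient $N(t)/t \to \bar\lambda$, leaving the finite limit $-\bar\lambda\log\bar\lambda_0 + \bar\lambda\log\bar\lambda - \bar\lambda$. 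Adding the surviving $\Lambda_0(t)/t \to \bar\lambda_0$ term and noting $\phi\log\phi - \log\Gamma(\phi)$ and the $O(\log t)/t$ remainders vanish, one obtains $\bar\lambda\log(\bar\lambda/\bar\lambda_0) - (\bar\lambda - \bar\lambda_0)$, which is exactly $D_{KL}(\bar\lambda\|\bar\lambda_0)$.

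The main obstacle is making the cancellation of the divergent $\log t$ terms rigorous and uniform: naively each of $\log\Gamma(\phi+N(t))$ and $(\phi+N(t))\log(\phi+\Lambda_0(t))$ grows like $t\log t$, so I must track them together rather than separately, and I must ensure the Stirling error terms and the discrepancy between $\log(N(t)/t)$ and $\log\bar\lambda$ (which requires $\bar\lambda>0$; the case $\bar\lambda=0$ needs a separate, easier argument) are genuinely $o(t)$ almost surely. I would organize this by writing $\tfrac1t\log M^{\Lambda_0}(t) = \tfrac{N(t)}{t}\log\tfrac{N(t)/t + o(1)}{\bar\lambda_0 + o(1)} - \tfrac{N(t)}{t} + \tfrac{\Lambda_0(t)}{t} + o(1)$ after the cancellation, then pass to the limit using continuity of $(x,y)\mapsto x\log(x/y)$ on $(0,\infty)^2$ together with the two almost-sure convergences $N(t)/t\to\bar\lambda$ and $\Lambda_0(t)/t\to\bar\lambda_0$.
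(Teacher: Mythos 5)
Your proposal follows essentially the same route as the paper's proof: expand $\log M^{\Lambda_0}(t)$ from the explicit formula, apply Stirling's approximation to $\log\Gamma(\phi+N(t))$, invoke an almost-sure strong law $N(t)/t\to\bar\lambda$ (the paper proves this lemma via Kolmogorov's SLLN on unit-interval counts plus a sandwich between $\lfloor t\rfloor$ and $\lceil t\rceil$, rather than your Chernoff/Borel--Cantelli variant, but it is the same ingredient), and cancel the divergent $\log t$ terms before passing to the limit. The argument is correct as outlined; the only difference is cosmetic, namely your explicit flagging of the $\bar\lambda=0$ edge case, which the paper (like the stated KL formula itself) implicitly excludes.
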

Theorem \ref{thm:single_hypothesis_test} states that whenever arrivals come from a process with a different asymptotic average arrival than the null hypothesis, then $M^{\Lambda_0}(t)$ grows exponentially quickly under the alternative, providing the exact asymptotic growth rate. Consequently, it will also almost surely grow past $\alpha^{-1}$ leading us to reject the null eventually with probability one.

By considering the complement of equation~\ref{eq:simpleville2},
we obtain a confidence process for the intensity measure $\Lambda$. 
\begin{corollary}
    For any fixed $\phi > 0$, the sets defined by 
    \begin{equation}
    \label{eq:confidence}
                C^\alpha(t) =\left\{L \in \Rp : M(N(t),L;\phi)\leq \alpha^{-1}\right\}
    \end{equation}
    provide a $1-\alpha$ confidence process for $\Lambda(t)$. Namely, if $\mathcal{T} \sim \mathcal{P}(\lambda_0)$, then
    $$
    \mathbb P[\Lambda_0(t)\in C^\alpha(t)~\forall t>0]=1-\alpha.
    $$
    Moreover, if $\lambda_0(t)$ is bounded over $t>0$ and $\Lambda_0(t)/t\to\bar\lambda_0$ then
    $$
    \mathbb P[\Lambda(t)\notin C^\alpha(t)~\forall t>0] = 1
    $$
    for any $\Lambda(t)$ with $\Lambda(t)/t\to\bar\lambda\neq\bar\lambda_0$.
\end{corollary}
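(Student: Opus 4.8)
The plan is to prove the two assertions separately, each by dualizing a result already established. For the coverage statement, fix $\mathcal{T}\sim\mathcal{P}(\lambda_0)$ and unwind the definition \eqref{eq:confidence}. Using $M^{\Lambda_0}(t)=M(N(t),\Lambda_0(t);\phi)$ from \eqref{eq:composite_martingale}, one has the pathwise event identity $\{\Lambda_0(t)\in C^\alpha(t)\ \text{for all}\ t>0\}=\{M^{\Lambda_0}(t)\le\alpha^{-1}\ \text{for all}\ t>0\}$, and the right-hand event is, up to a null set, the complement of the event in \eqref{eq:simpleville2}; taking probabilities then gives exactly $1-\alpha$. The only delicate point is that \eqref{eq:simpleville2} is phrased with ``$\ge\alpha^{-1}$'' while $C^\alpha(t)$ is defined with ``$\le\alpha^{-1}$'', so the two events can in principle differ on $\{\sup_{t}M^{\Lambda_0}(t)=\alpha^{-1}\}$; I would dispose of this by observing that between jumps of $N$ the map $t\mapsto M(n,\Lambda_0(t);\phi)$ is unimodal in $\Lambda_0(t)$ (decreasing then increasing), so the running maximum of $M^{\Lambda_0}$ is attained only at $t=0$ or at an arrival epoch --- a countable set at which $M^{\Lambda_0}$ has a continuous distribution --- and therefore a.s.\ never equals $\alpha^{-1}$ exactly. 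This is a one-sentence remark, not a real obstacle.

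For the power statement, note first that $C^\alpha(t)$ depends on the sample path only through $N(t)$, so under $\mathcal{T}\sim\mathcal{P}(\lambda_0)$ we have $\Lambda(t)\notin C^\alpha(t)\iff M(N(t),\Lambda(t);\phi)>\alpha^{-1}$ for every $t$. It therefore suffices to show that $M(N(t),\Lambda(t);\phi)\to\infty$ almost surely as $t\to\infty$: this produces a (random) time beyond which $\Lambda(t)$ is permanently excluded, which is the intended content. (Read literally, ``$\Lambda(t)\notin C^\alpha(t)$ for all $t>0$'' cannot hold, since $\Lambda(t)\to0$ while $0\in C^\alpha(t)$ for $t$ near $0$; the statement must be understood as ``for all sufficiently large $t$'', matching the ``eventually excluded'' phrasing in the introduction.)

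The key step is to recognize that $M(N(t),\Lambda(t);\phi)$ is exactly the process $M^{\Lambda_0}(\cdot)$ of Theorem~\ref{thm:single_hypothesis_test} with the roles of the two intensities interchanged: apply that theorem with data-generating intensity $\lambda_0$ (bounded, by the corollary's hypothesis, with limiting average $\bar\lambda_0$) and with the theorem's ``null'' cumulative intensity taken to be $\Lambda$ itself, whose only relevant property there is $\Lambda(t)/t\to\bar\lambda$. Theorem~\ref{thm:single_hypothesis_test} then gives $t^{-1}\log M(N(t),\Lambda(t);\phi)\overset{a.s.}{\to}D_{KL}(\bar\lambda_0\|\bar\lambda)$. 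Since $\bar\lambda_0\neq\bar\lambda$, this divergence is strictly positive (equal to $\bar\lambda_0\,(x-1-\log x)$ with $x=\bar\lambda/\bar\lambda_0$ when $\bar\lambda_0>0$, and to $\bar\lambda>0$ in the degenerate case $\bar\lambda_0=0$, with the usual convention for $\bar\lambda=0$), so $\log M(N(t),\Lambda(t);\phi)$ grows linearly in $t$ and hence $M(N(t),\Lambda(t);\phi)\to\infty$ almost surely. Combined with the reduction of the previous paragraph, this finishes the proof.

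The main obstacle is bookkeeping rather than analysis: Theorem~\ref{thm:single_hypothesis_test} is stated for the mixture supermartingale evaluated at the \emph{null}'s cumulative intensity with data drawn from a \emph{different} intensity, whereas the power claim for $C^\alpha$ requires the mixture evaluated at a \emph{false hypothetical} $\Lambda$ with data drawn from $\lambda_0$. Seeing that these are literally the same object after relabeling, and checking that the boundedness hypothesis of Theorem~\ref{thm:single_hypothesis_test} then falls on $\lambda_0$ --- precisely the intensity the corollary assumes bounded, and not on the hypothetical $\Lambda$ --- is the crux. Everything else (the complementation in the coverage half, the strict positivity of the Kullback--Leibler divergence between distinct Poisson means) is routine.
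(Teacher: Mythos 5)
Your proof is correct and follows essentially the same route the paper intends (the paper gives no separate proof of this corollary): coverage is the complement of the Ville bound in equation~\eqref{eq:simpleville2} for $M^{\Lambda_0}(t)=M(N(t),\Lambda_0(t);\phi)$, and the power claim is Theorem~\ref{thm:single_hypothesis_test} applied with the roles of the data-generating intensity and the hypothesized cumulative intensity interchanged, so that the boundedness hypothesis lands on $\lambda_0$ and the growth rate is $D_{KL}(\bar\lambda_0\|\bar\lambda)>0$. Your two side remarks --- handling the $\geq$ versus $\leq$ boundary event and reading ``$\forall t>0$'' in the exclusion statement as ``for all sufficiently large $t$,'' consistent with the ``eventually excluded'' phrasing in the introduction --- are sound clarifications of details the paper glosses over.
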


Note that the confidence process $C^\alpha(t)$ is particularly simple because $\Lambda_0(\cdot)$ only enters into $M^{\Lambda_0}(t)$ as $\Lambda_0(t)$, its value at $t$. Had this not been the case, we would generally need to consider $C^\alpha(t)$ to be a set of possible $\Rp\to\Rp$ \textit{functions} $\Lambda(\cdot)$, rather than just a set in $\Rp$. This significant simplification is due to our specific choice of proportional hazard alternative. 
This is important because it makes $C^\alpha(t)$ interpretable and practically usable as we can simply plot it over time (see figure \ref{fig:counting}).
Moreover, since $M(n,L;\phi)$ is convex in $L$, the set defined in equation \eqref{eq:confidence} is an interval, which can be computed numerically using a root-finding algorithm. 

While benefiting greatly in terms of both computation and interpretability of $C^\alpha(t)$ from how we chose the space of alternative arrival processes when defining $M^{\Lambda_0}(t)$, we do not pay much for this choice in terms of power. We show we are able to exclude any alternative that converges at any distinct average arrival. This is made obvious by figure \ref{fig:average} showing how our confidence process scaled by $1/t$ concentrates at a point.

\begin{example}
\label{ex:single_process}
In the following simulated example $\mathcal{T} \sim \mathcal{P}(\lambda)$ with $\lambda(t) =e^{3\sin(2\pi t/20)}$. Figure \ref{fig:counting} shows the realized point process and the confidence process $C^\alpha(t)$, which we remark covers $\Lambda(t) = \mathbb{E}[N(t)]$ at all times in this particular sample path.  Figure \ref{fig:average} shows the confidence process on $(1/t)\Lambda(t)$.

\begin{figure}
    \centering
    \includegraphics[width=\linewidth]{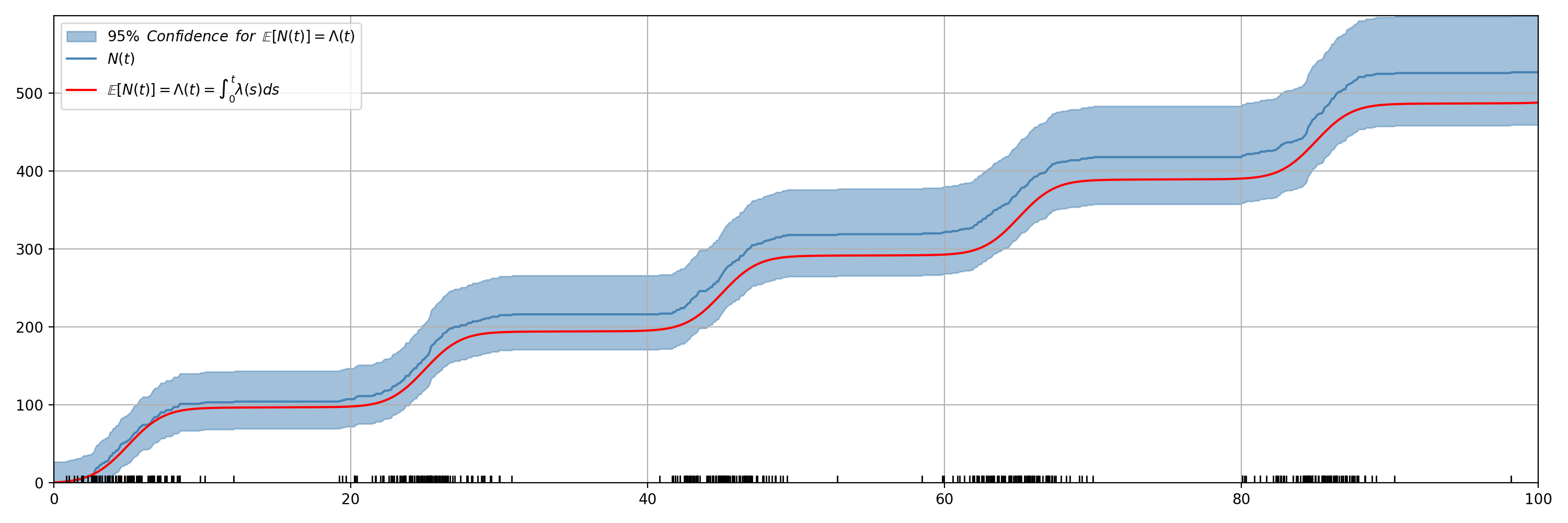}
    \caption{The intensity measure $\Lambda(t) = \mathbb{E}[N(t)]$ (red), the observed counting process $N(t)$ (blue), the $0.95$ confidence process $C^\alpha(t)$ (shaded blue), and the realized point process (black ticks).}
    \label{fig:counting}
\end{figure}
\begin{figure}
    \centering
    \includegraphics[width=\linewidth]{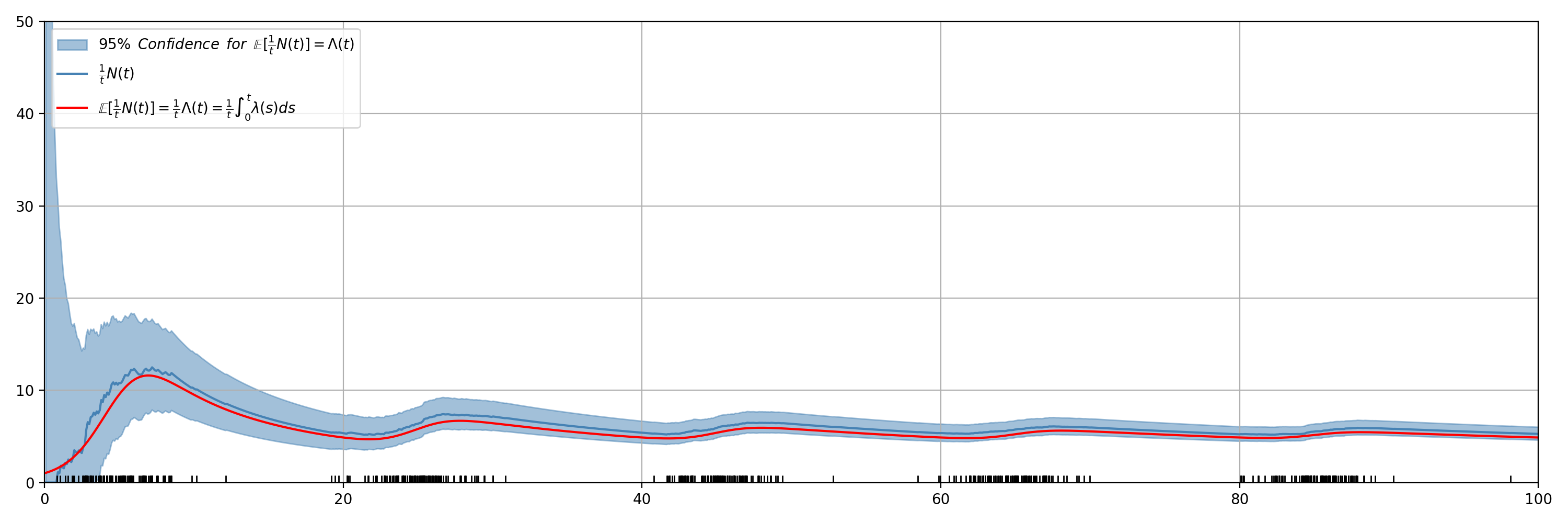}
    \caption{The time-average of intensity $\frac{1}{t}\Lambda(t)$ (red), the estimated time-average of intensity (blue), the $0.95$ confidence process (shaded blue), and the realized point process (black ticks).}
    \label{fig:average}
\end{figure}
\end{example}

\section{Confidence Processes for Dual Intensity Measures}
\label{sec:two_processes}
Now suppose the experimenter observes two independent realizations $\mathcal{T}^B \sim \mathcal{P}(\lambda^B)$ and $\mathcal{T}^A\sim \mathcal{P}(\lambda^A)$. Our second result provides a confidence process on $(\Lambda^A(t), \Lambda^B(t))$ \textit{jointly} based on observations $\mathcal{T}^B\cap [0,t]$ and $\mathcal{T}^A\cap [0,t]$ at time $t$, from which a confidence process on $\Lambda^B(t) - \Lambda^A(t)$ can be derived.

\begin{theorem}
\label{thm:joint_martingale}
    Assume $\mathcal{T}^A \sim \mathcal{P}(\lambda_0^A)$ and $\mathcal{T}^B \sim \mathcal{P}(\lambda_0^B)$ independently. For any fixed $\phi > 0$ define
    \begin{equation}
    \label{eq:joint_martingale}
        M^{\Lambda_0^A, \Lambda_0^B}(t) =  M^{\Lambda^A_0}(t)M^{\Lambda^B_0}(t)
\end{equation}
The process $M^{\Lambda_0^A, \Lambda_0^B}(t)$ is a continuous time nonnegative supermartingale.
\end{theorem}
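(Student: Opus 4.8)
The plan is to work on the joint filtration $\mathcal F_t = \mathcal F^A_t \vee \mathcal F^B_t$ generated by both point processes and to exploit the independence of $\mathcal T^A$ and $\mathcal T^B$ together with Theorem~\ref{thm:compositemartingale} applied separately to each coordinate. Nonnegativity is immediate since $M^{\Lambda^A_0}(t)$ and $M^{\Lambda^B_0}(t)$ are each nonnegative by Theorem~\ref{thm:compositemartingale}; integrability follows because, by independence and the supermartingale property, $\mathbb E[M^{\Lambda^A_0}(t)M^{\Lambda^B_0}(t)] = \mathbb E[M^{\Lambda^A_0}(t)]\,\mathbb E[M^{\Lambda^B_0}(t)] \le 1$. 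Adaptedness to $(\mathcal F_t)$ and the requisite path regularity are inherited from the two factors (a product of adapted, right-continuous processes is adapted and right-continuous), so the only substantive point is the supermartingale inequality.

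For that inequality, fix $t>0$ and $\delta>0$ and consider $\mathbb E[\,M^{\Lambda^A_0}(t+\delta)\,M^{\Lambda^B_0}(t+\delta)\mid \mathcal F_t\,]$. The first factor is a measurable function of $\mathcal T^A$ alone and the second of $\mathcal T^B$ alone (here $\Lambda^A_0,\Lambda^B_0$ are deterministic), $\mathcal F_t = \mathcal F^A_t\vee\mathcal F^B_t$, and $\sigma(\mathcal T^A)$ is independent of $\sigma(\mathcal T^B)$. The key step is the factorization lemma for conditional expectations under this "block-diagonal" independence structure:
\[
\mathbb E\!\left[M^{\Lambda^A_0}(t+\delta)\,M^{\Lambda^B_0}(t+\delta)\mid \mathcal F_t\right] = \mathbb E\!\left[M^{\Lambda^A_0}(t+\delta)\mid \mathcal F^A_t\right]\,\mathbb E\!\left[M^{\Lambda^B_0}(t+\delta)\mid \mathcal F^B_t\right].
\]
This is proved by a standard $\pi$--$\lambda$ / monotone-class argument: verify the identity against test events of the form $\mathbf 1_{G_A}\mathbf 1_{G_B}$ with $G_A\in\mathcal F^A_t$ and $G_B\in\mathcal F^B_t$, using independence to split each side into an $A$-expectation times a $B$-expectation and the tower property on each, then extend to all of $\mathcal F_t$; nonnegativity of both factors makes the extension clean. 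Applying Theorem~\ref{thm:compositemartingale} to each coordinate gives $\mathbb E[M^{\Lambda^A_0}(t+\delta)\mid\mathcal F^A_t]\le M^{\Lambda^A_0}(t)$ and $\mathbb E[M^{\Lambda^B_0}(t+\delta)\mid\mathcal F^B_t]\le M^{\Lambda^B_0}(t)$; since all four quantities are nonnegative, multiplying the two inequalities yields $\mathbb E[M^{\Lambda^A_0,\Lambda^B_0}(t+\delta)\mid\mathcal F_t]\le M^{\Lambda^A_0}(t)\,M^{\Lambda^B_0}(t) = M^{\Lambda^A_0,\Lambda^B_0}(t)$.

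The main obstacle is purely the careful handling of the factorization step: one must be precise about which $\sigma$-algebras the two factors are measurable with respect to and invoke independence at the right place. This matters because the product of two supermartingales adapted to the \emph{same} filtration need not be a supermartingale, so the argument genuinely relies on the independent product structure of $(\mathcal F^A_t)$ and $(\mathcal F^B_t)$, not merely on each factor being a supermartingale on the enlarged filtration $(\mathcal F_t)$. A minor preliminary remark is that $M^{\Lambda^A_0}$ and $M^{\Lambda^B_0}$ do remain supermartingales with respect to $(\mathcal F_t)$ — immediate from independence, as $\mathbb E[M^{\Lambda^A_0}(t+\delta)\mid\mathcal F_t]=\mathbb E[M^{\Lambda^A_0}(t+\delta)\mid\mathcal F^A_t]$ — but this observation is already subsumed by the factorization identity above.
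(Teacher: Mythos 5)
Your proof is correct and takes the same route as the paper: the paper disposes of this theorem in one line by noting that a product of two \emph{independent} nonnegative supermartingales is itself a nonnegative supermartingale, which is exactly the fact you establish via the conditional-expectation factorization on $\mathcal F_t=\mathcal F^A_t\vee\mathcal F^B_t$ and an application of Theorem~\ref{thm:compositemartingale} to each factor. Your write-up simply supplies the standard $\pi$--$\lambda$ details (and the correct caveat that independence, not merely the supermartingale property on a common filtration, is what makes the product argument work) that the paper leaves implicit.
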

\noindent Theorem \ref{thm:joint_martingale} follows readily from theorem \ref{thm:compositemartingale}, as a product of two independent nonnegative supermartingales is itself a nonnegative supermartingale.
\begin{corollary}
    For any fixed $\phi > 0$, the sets defined by 
    \begin{equation}
\label{eq:joint_confidence}C^\alpha(t) = \{ (L^A, L^B) \in \Rp^2 :  M(N^A(t),L^A;\phi)M(N^B(t),L^B;\phi) \leq \alpha^{-1}\}
\end{equation}
form a $1-\alpha$ confidence process for $(\Lambda^A(t), \Lambda^B(t))$. Namely, if $\mathcal{T^A} \sim \mathcal{P}(\lambda^A),\mathcal{T^B} \sim \mathcal{P}(\lambda^B)$, then
    $$
    \mathbb P[(\Lambda^A(t),\Lambda^B(t))\in C^\alpha(t)~\forall t>0]=1-\alpha.
    $$
\end{corollary}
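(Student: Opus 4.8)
The plan is to recognize the set $C^\alpha(t)$ in \eqref{eq:joint_confidence} as the time-$t$ acceptance region of the sequential test built from the supermartingale of Theorem~\ref{thm:joint_martingale}, so that coverage reduces to a single application of Ville's inequality, exactly mirroring the one-process argument. Write
\[
M(t)\coloneqq M(N^A(t),\Lambda^A(t);\phi)\,M(N^B(t),\Lambda^B(t);\phi),
\]
which, by \eqref{eq:composite_martingale} and \eqref{eq:joint_martingale}, is precisely the joint process $M^{\Lambda^A,\Lambda^B}(t)$ of Theorem~\ref{thm:joint_martingale} evaluated at the \emph{true} intensity measures. Definition \eqref{eq:joint_confidence} then gives the exact pointwise equivalence $(\Lambda^A(t),\Lambda^B(t))\in C^\alpha(t)\iff M(t)\le\alpha^{-1}$, so the coverage event $\{(\Lambda^A(t),\Lambda^B(t))\in C^\alpha(t)\ \forall t>0\}$ coincides with $\{\sup_{t>0}M(t)\le\alpha^{-1}\}$.

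The steps, in order, are: (i) apply Theorem~\ref{thm:joint_martingale} with $\lambda_0^A,\lambda_0^B$ taken to be the data-generating intensities $\lambda^A,\lambda^B$ (legitimate since $\mathcal T^A\sim\mathcal P(\lambda^A)$ and $\mathcal T^B\sim\mathcal P(\lambda^B)$ are independent), which makes $M(t)$ a nonnegative supermartingale with $M(0)=M(0,0;\phi)^2=1$; (ii) invoke Ville's inequality for continuous-time nonnegative supermartingales \citep{ville} to get $\mathbb P[\exists t>0:M(t)\ge\alpha^{-1}]\le\alpha$, hence $\mathbb P[(\Lambda^A(t),\Lambda^B(t))\in C^\alpha(t)\ \forall t>0]\ge 1-\alpha$; (iii) upgrade ``$\ge$'' to ``$=$'' exactly as for \eqref{eq:simpleville2} and its one-process corollary, namely each factor is in fact a nonnegative martingale (by the equality in Theorem~\ref{thm:simplemartingale}, carried through the $\Pi$-mixture and then through the independent product) started at $1$ that, when the limiting average intensities are positive, converges almost surely to $0$, so the reasoning that makes Ville's bound tight in the one-dimensional case applies verbatim to $M(t)$ and its supremum carries no atom at $\alpha^{-1}$. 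Combining (ii) and (iii) gives the claimed equality.

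The coverage guarantee proper, the ``$\ge 1-\alpha$'' direction, is thus immediate from Theorem~\ref{thm:joint_martingale} and Ville and involves no new ideas; the only genuinely non-automatic point is step (iii), the tightness of the bound for the two-dimensional product process. I do not expect this to be a real obstacle: it is simply the reprise of the argument already used to obtain the equality in \eqref{eq:simpleville2}, now run on $M(t)$ rather than on a single $M^{\Lambda_0}$, and in any case weakening the conclusion to ``$\ge 1-\alpha$'' would already suffice for $C^\alpha$ to be a valid confidence process.
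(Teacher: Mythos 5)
Your proposal matches the paper's (implicit) argument exactly: the corollary is obtained by applying Ville's inequality to the product supermartingale of Theorem~\ref{thm:joint_martingale} evaluated at the true intensity measures, and reading the coverage event as the complement of the crossing event, just as in the single-process corollary. The only delicate point, the exact equality rather than ``$\geq 1-\alpha$,'' you handle the same way the paper does for \eqref{eq:simpleville2}, so no gap beyond what is already present in the paper itself.
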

\noindent The set $C^\alpha(t)$ is convex because $M(N^A(t),L^A;\phi)M(N^B(t),L^B;\phi)$ is a product of two univariate convex functions in $L^A$ and $L^B$, respectively. A joint confidence set on $(\Lambda^B(t) - \Lambda^A(t), \Lambda^A(t) + \Lambda^B(t))$ is obtained via the mapping $T(C^\alpha(t))$ where $T : (x,y) \rightarrow (y-x, y+x)$. These sets are visualized in figure \ref{fig:confidence_sets}.
\begin{figure}
    \centering
    \includegraphics[width=\linewidth]{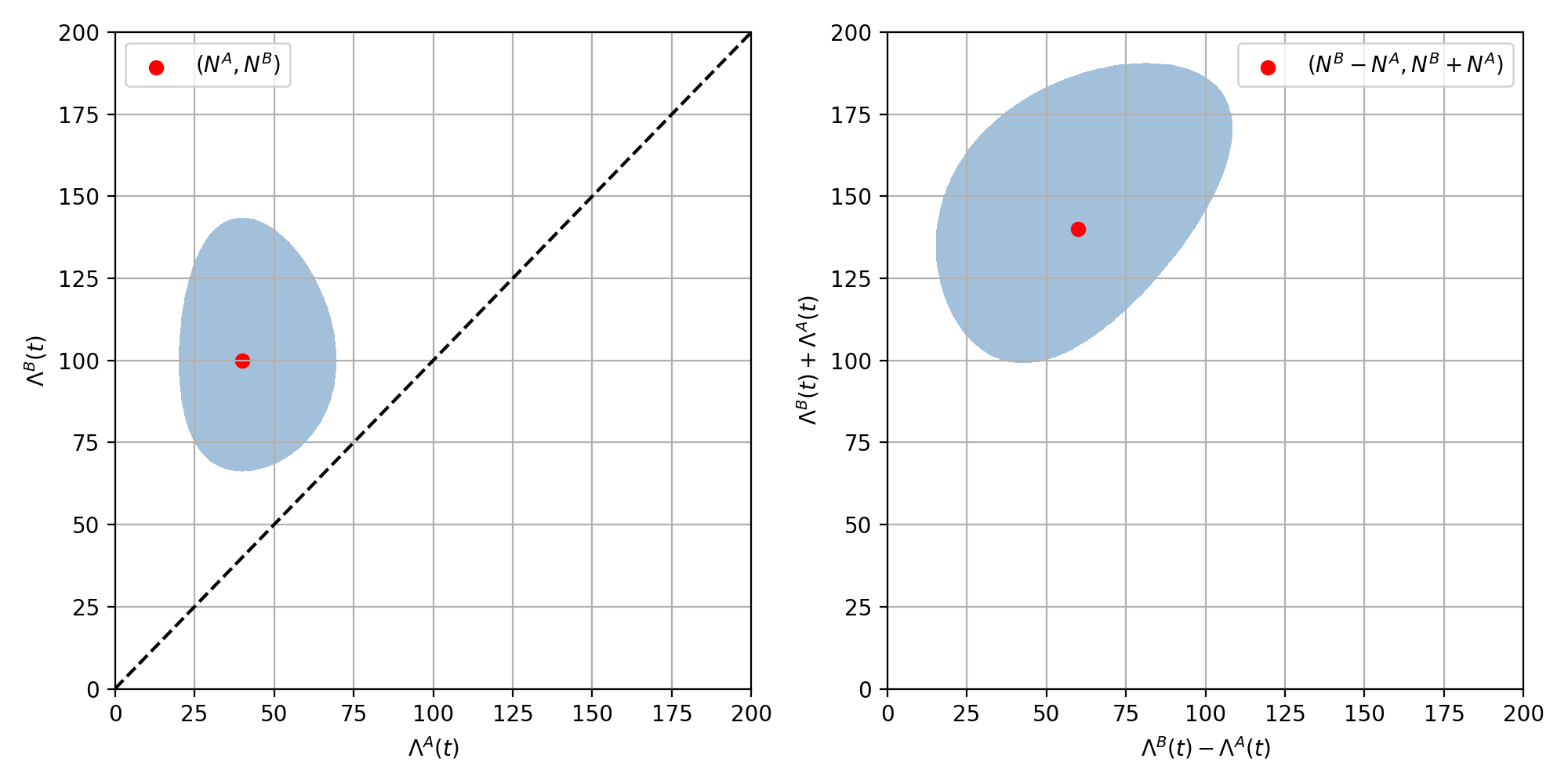}
    \caption{(Left) $1-\alpha$ Confidence set $C^\alpha(t)$ for $(\Lambda^A(t), \Lambda^B(t))$. (Right) $1-\alpha$ confidence set $T(C^\alpha(t))$, where $T : (x,y) \rightarrow (y-x, y+x)$, for $(\Lambda^B(t)-\Lambda^A(t), \Lambda^B(t)+\Lambda^A(t))$. Parameters: $\alpha=0.05$, $\phi = 1$, $N^A = 40$ and $N^B = 100$.}
    \label{fig:confidence_sets}
\end{figure}

\noindent An advantage of the joint confidence process on $(\Lambda^A(t), \Lambda^B(t))$ is that it yields simultaneous confidence processes on functionals thereof, such as $\Lambda^B(t)$, $\Lambda^A(t)$, and the difference $\Lambda^B(t) - \Lambda^A(t)$ for example. Confidence processes for $\Lambda^B(t) - \Lambda^A(t)$ can be obtained by simply projecting $T(C^\alpha(t))$ along the first dimension, i.e., $\{x \in \mathbb{R}_{\geq 0} : (x,y) \in T(C^\alpha(t)) \text{ for some } y\}$. Confidence processes for $\Lambda^B(t)$ and $\Lambda^A(t)$ can be obtained similarly from $C^\alpha(t)$. The lower and upper bounds which define these intervals are trivial to find numerically, requiring only univariate root-finding algorithms, details of which are provided in the appendix.

\begin{example}
\label{ex:two_processes}
    In the following simulated example $\mathcal{T}^A \sim \mathcal{P}(\lambda^A)$ and $\mathcal{T}^B\sim\mathcal{P}(\lambda^B)$ with $\lambda^A(t) =e^{3\sin(2\pi t/20)}$ and $\lambda^B(t) =e^{2\sin(2\pi t/20)}$ respectively. Figure \ref{fig:counting_processes} shows the simultaneous $1-\alpha$ confidence processes on $\Lambda^A(t)$ and $\Lambda^B(t)$ from equation \ref{eq:joint_confidence}. Figre \ref{fig:rates} shows the simultaneous $1-\alpha$ confidence processes on $(1/t)\Lambda^A(t)$ and $(1/t)\Lambda^B(t)$.  Figure \ref{fig:difference_rates} visualizes the confidence process for $\Lambda^B(t)/t - \Lambda^A(t)/t$.

\begin{figure}
    \centering
    \includegraphics[width=\linewidth]{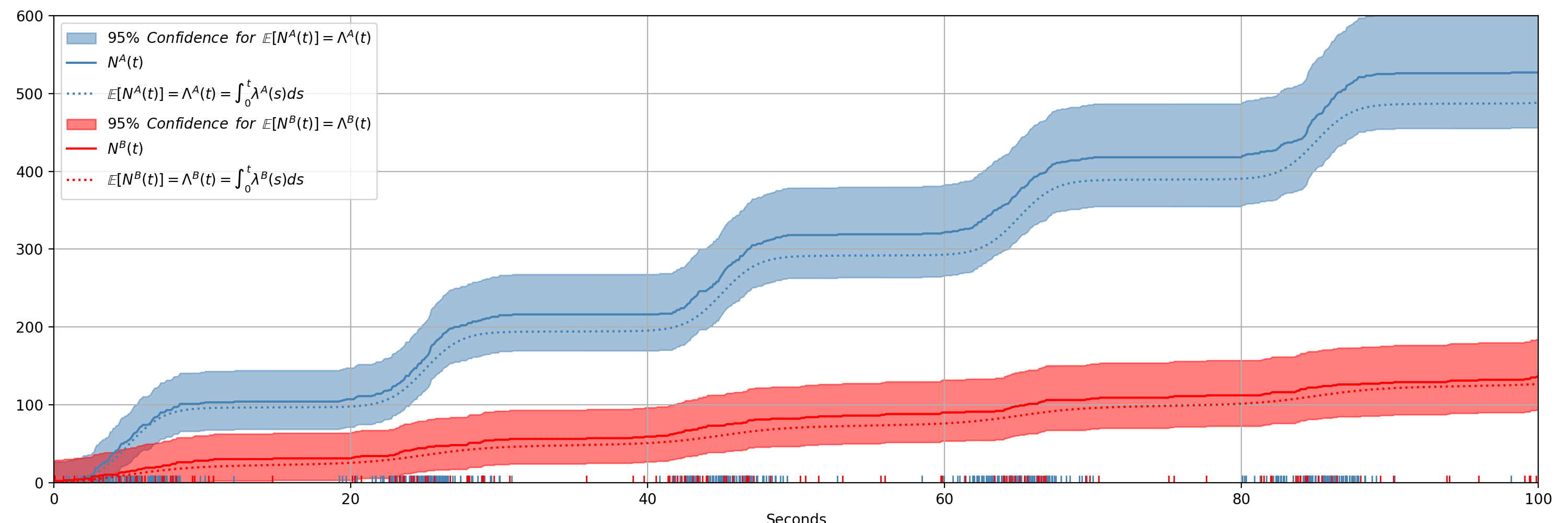}
    \caption{$1-\alpha$ simultaneous confidence processes on $\Lambda^B(t)$ and $\Lambda^A(t)$}
    \label{fig:counting_processes}
\end{figure}

\begin{figure}
    \centering
    \includegraphics[width=\linewidth]{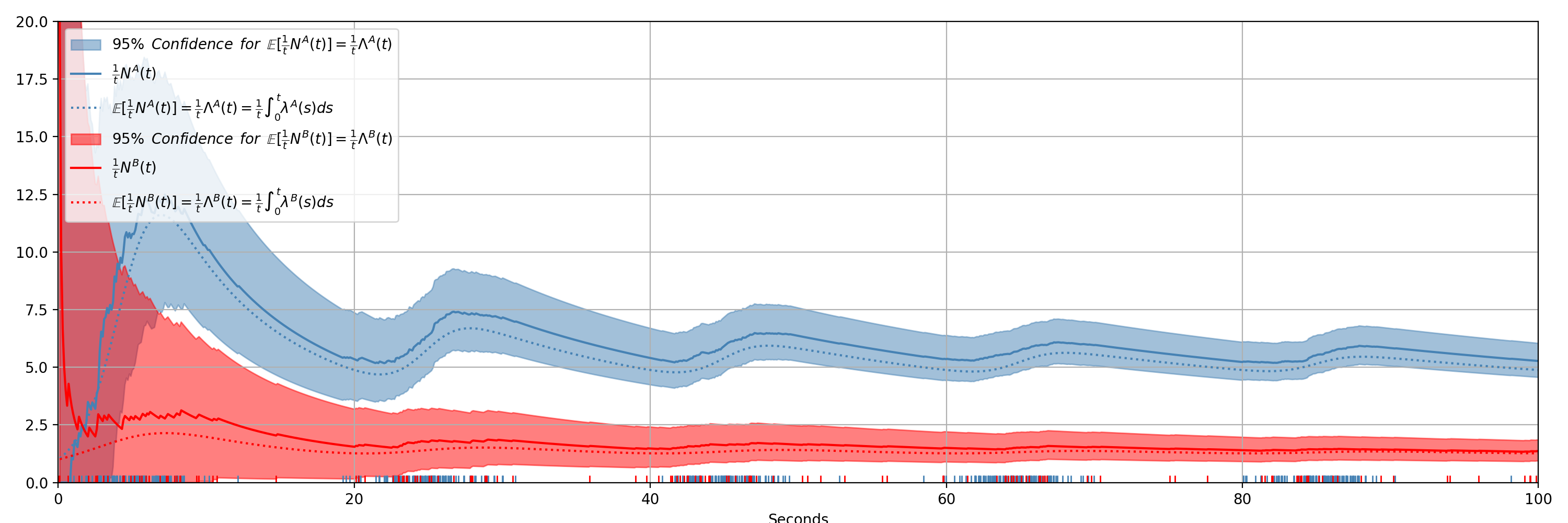}
    \caption{$1-\alpha$ confidence processes on $\Lambda^B(t)/t$ and $\Lambda^A(t)/t$}
    \label{fig:rates}
\end{figure}

\begin{figure}
    \centering
    \includegraphics[width=\linewidth]{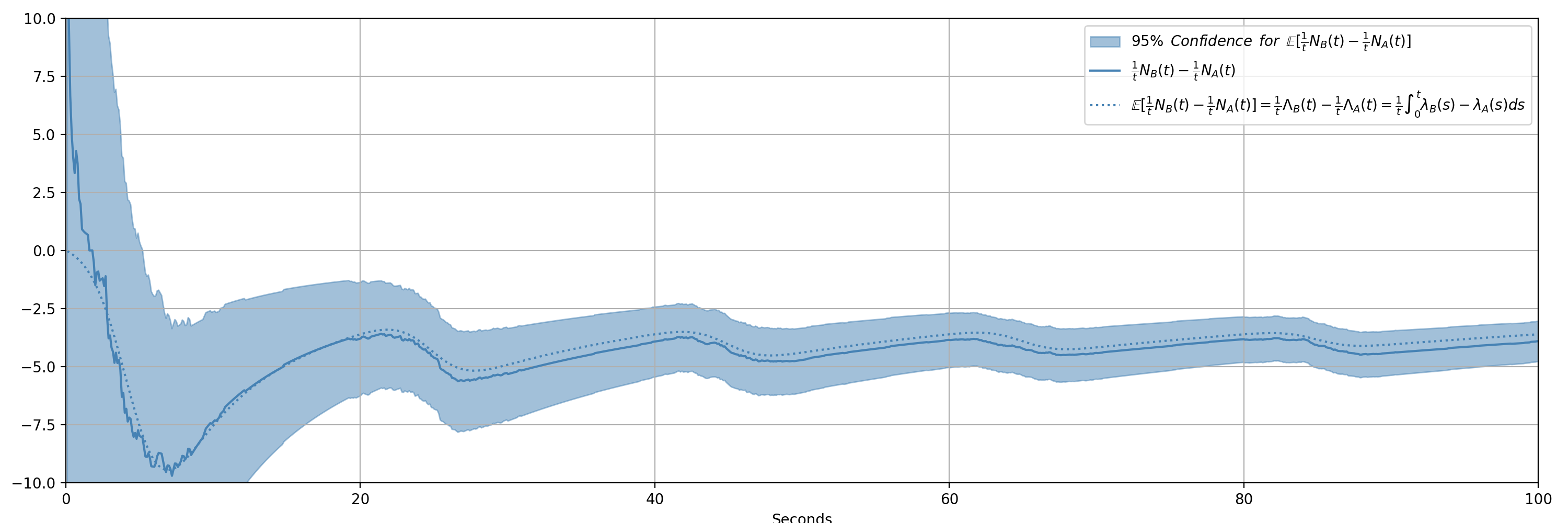}
    \caption{$1-\alpha$ simultaneous confidence processes on $\Lambda^B(t)/t-\Lambda^A(t)/t$}
    \label{fig:difference_rates}
\end{figure}

\end{example}

\section{$e$- and sequential $p$-processes for hypothesis testing}
\label{sec:testing}
We now turn to the question of testing hypotheses about the $A$ and $B$ arrival processes and in particular equality in their inhomogeneous rates.

Let $\mathcal{I}$ denote the set of all pairs of locally integrable nonnegative functions $\lambda: \mathbb{R}_{\geq 0} \rightarrow \mathbb{R}_{\geq 0}$, that is, $(\lambda^A, \lambda^B) \in \mathcal{I}$. A null hypothesis $H_0$ concerning $\lambda^A$ and $\lambda^B$ defines the subset $\mathcal{I}_0 \subset \mathcal{I}$. Let $\Theta_0(t) = \{(x,y) \in \mathbb{R}^2_{\geq 0} : x = \int_0^t f(s)ds, y=\int_0^t g(s)ds, (f,g) \in \mathcal{I}_0\}$ denote the set of values taken by the intensity measures at time $t$ which are consistent with the null hypothesis. The hypothesis $H_0$ can be rejected at the $\alpha$-level at the earliest $t$ for which $C^\alpha(t) \cap \Theta_0(t) = \emptyset$. For example, the null hypothesis of equality $\lambda^A(t)=\lambda^B(t)$ $\forall t$ implies $\Lambda^A(t) = \Lambda^B(t)$ $\forall t$ and can be rejected at the $\alpha$ level as soon as $C^\alpha(t) \cap \{(x,x) : x \in \mathbb{R}_{\geq 0}\} = \emptyset$. 

To measure the extent to which the null is violated, one can determine the smallest $\alpha$-level sequential test that would have been rejected. This amounts to, at each $t$, increasing the size of the $1-\alpha$ confidence set, by decreasing $\alpha$, until it intersects with $\Theta_0$. This defines a sequential $p$-value process, providing the guarantee $\mathbb{P}\left[\underset{t>0}{\inf}\, p(t) \leq \alpha \right] \leq \alpha$, where $p(t) = \inf \{\alpha : C^\alpha(t) \cap \Theta_0  = \emptyset\}$. Equivalently, $E(t) = 1/p(t)$ defines a valid $e$-process by proposition 12 of \citet{ramdas2022admissible}.

\begin{theorem}
\label{thm:e_process}
    Assume $\mathcal{T}^A \sim \mathcal{P}(\lambda)$ and $\mathcal{T}^B \sim \mathcal{P}(\lambda) $ independently. For any fixed $\phi > 0$
    \begin{equation}
        E(t):= \prod_{i \in \{A, B\}} \frac{\phi^\phi}{(\phi + \hat{\Lambda}(t))^{\phi+N^i(t)}}\frac{\Gamma(\phi+N^i(t))}{\Gamma(\phi)}e^{\hat{\Lambda}(t)},
    \end{equation}
    where $\hat{\Lambda}(t) = \frac{1}{2}\left(N^A(t) + N^B(t)\right)$, is an $e$-process.
\end{theorem}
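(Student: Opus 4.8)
The plan is to recognize $E(t)$ as the $e$-process $E(t)=1/p(t)$ obtained from the joint confidence process, to evaluate it in closed form by a one-dimensional minimization, and then to dominate it by an explicit test supermartingale uniformly over the composite null. First I would make the identification explicit: by \eqref{eq:joint_confidence}, $C^\alpha(t)=\{(L^A,L^B)\in\Rp^2: M(N^A(t),L^A;\phi)M(N^B(t),L^B;\phi)\le\alpha^{-1}\}$, and for the null $\lambda^A\equiv\lambda^B$ the relevant set of intensity-measure values is the diagonal $\Theta_0(t)=\{(x,x):x\in\Rp\}$ (every $x\ge0$ is realized by a constant common intensity $x/t$). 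Hence $C^\alpha(t)\cap\Theta_0(t)=\emptyset$ iff $M(N^A(t),x;\phi)M(N^B(t),x;\phi)>\alpha^{-1}$ for all $x\ge0$, so
\[
E(t)=\frac{1}{p(t)}=\min_{x\ge0}\,M(N^A(t),x;\phi)\,M(N^B(t),x;\phi).
\]
Writing $g(x)=\log\!\big(M(N^A(t),x;\phi)M(N^B(t),x;\phi)\big)$ one gets $g'(x)=2-(2\phi+N^A(t)+N^B(t))/(\phi+x)$, which is strictly increasing; thus $g$ is strictly convex with unique minimizer $x^\star=\tfrac12(N^A(t)+N^B(t))=\hat\Lambda(t)\ge0$, and plugging $x^\star$ back and simplifying reproduces exactly the product stated in the theorem. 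In particular $E$ is nonnegative, adapted, and a continuous function of the counting processes.

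Next I would produce the dominating supermartingale. Fix any $P\in H_0$, so $\lambda^A\equiv\lambda^B\equiv\lambda$ with $\Lambda(t)=\int_0^t\lambda(s)\,ds$. Theorem \ref{thm:joint_martingale} with $\lambda_0^A=\lambda_0^B=\lambda$ gives that $M^{\Lambda,\Lambda}(t)=M(N^A(t),\Lambda(t);\phi)M(N^B(t),\Lambda(t);\phi)$ is a nonnegative $P$-supermartingale with initial value $M(0,0;\phi)^2=1$. Because $\Lambda(t)\ge0$ is an admissible choice of $x$ in the minimization above, $0\le E(t)\le M^{\Lambda,\Lambda}(t)$ for all $t$, $P$-a.s. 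Hence $E$ is dominated by a test supermartingale under every $P\in H_0$, and optional stopping for nonnegative supermartingales yields $\mathbb{E}_P[E(\tau)]\le\mathbb{E}_P[M^{\Lambda,\Lambda}(\tau)]\le1$ for every stopping time $\tau$, which is the definition of an $e$-process. (Alternatively, time-uniform coverage of $(\Lambda(t),\Lambda(t))\in\Theta_0(t)$ by the joint confidence process shows $p(t)$ is a valid sequential $p$-process, and one invokes Proposition 12 of \citet{ramdas2022admissible}.)

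The only real computation is the minimization, which is routine; the single conceptual point — the one I would emphasize — is that the domination holds \emph{uniformly} over the composite null precisely because the unknown common cumulative intensity $\Lambda(t)$ is always one of the candidate values $x$ over which $E(t)$ is minimized, which is the reverse-information-projection / universal-inference device specialized to this model. The only bookkeeping worth double-checking is that the infimum over the null is attained at an interior (in fact unconstrained) point, which holds since $x^\star=\hat\Lambda(t)\ge0$.
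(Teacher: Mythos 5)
Your proof is correct, and it takes a route that is genuinely a bit different from the paper's. The shared core is the same closed-form minimization: both you and the paper observe that the profile of $M(N^A(t),L^A;\phi)M(N^B(t),L^B;\phi)$ over the diagonal is (log-)convex with unique minimizer $\hat\Lambda(t)=\tfrac12(N^A(t)+N^B(t))$, which yields the stated product form. Where you diverge is in how validity is established. The paper's appendix proof is devoted to the identity $p(t)=1/E(t)$, carefully proving both inequalities ($p(t)\le 1/E(t)$ and $p(t)\ge 1/E(t)$) for the sequential $p$-value $p(t)=\inf\{\alpha: C^\alpha(t)\cap\Theta_0(t)=\emptyset\}$, and then relies on the cited equivalence (Proposition 12 of \citet{ramdas2022admissible}) to pass from a valid $p$-process to an $e$-process; the $e$-process property itself is thus outsourced. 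You instead prove the $e$-process property directly and self-containedly: for any null $P$ with common intensity $\lambda$, Theorem \ref{thm:joint_martingale} gives the nonnegative supermartingale $M^{\Lambda,\Lambda}(t)=M(N^A(t),\Lambda(t);\phi)M(N^B(t),\Lambda(t);\phi)$ with initial value $1$, and since $\Lambda(t)\ge 0$ is feasible in the minimization, $E(t)\le M^{\Lambda,\Lambda}(t)$ pathwise, so optional stopping for nonnegative (right-continuous) supermartingales yields $\mathbb{E}_P[E(\tau)]\le 1$ for every stopping time and every $P$ in the null. This is exactly the universal-inference domination argument that the paper only mentions in passing as an observation; your version makes the uniformity over the composite null explicit and avoids the external citation, at the cost of not exhibiting the exact sequential-$p$-value interpretation that the paper's two-inequality argument establishes (and which the paper uses to report $p(t)$ in its application). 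One cosmetic remark: the paper's introduction phrases the $e$-process property with equality $\mathbb{E}_{H_0}[E(\tau)]=1$, whereas what holds for this composite-null construction (and what you correctly prove) is the standard requirement $\mathbb{E}_P[E(\tau)]\le 1$; your statement is the right one.
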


We observe that this approach, which defines a $1-\alpha$ multivariate confidence sequence, decreases $\alpha$ until $C^\alpha \cap \Theta_0 \neq \emptyset$ to define a sequential $p$-value $p(t)$, and sets $E(t) = 1/p(t)$, recovers the universal inference $e$-process of \citet{Wasserman_2020}. The next theorem demonstrates that for a broad class of alternatives this test is power 1.

\begin{theorem}
\label{thm:asymp_growth}
     Assume $\mathcal{T}^A \sim \mathcal{P}(\lambda^A)$ and $\mathcal{T}^B \sim \mathcal{P}(\lambda^B)$ independently, $\lambda^A(t)$ and $\lambda^B(t)$ bounded for all $t>0$, with $\frac{\Lambda^A(t)}{t}\rightarrow \bar{\lambda}^A$ and $\frac{\Lambda^B(t)}{t}\rightarrow \bar{\lambda}^B$ as $t\rightarrow \infty$. Let $\bar{\lambda}^M = \frac{1}{2}\left(\bar{\lambda}^B+\bar{\lambda}^A\right)$, then for any $\phi > 0$
     \begin{equation}
     \label{eq:asymp_growth}
     \begin{split}
                         &\frac{\log E(t)}{t}\overset{a.s.}{\rightarrow} D_{KL}(\bar{\lambda}^B||\bar{\lambda}^M)+D_{KL}(\bar{\lambda}^A||\bar{\lambda}^M)\\
                         &\qquad=\bar{\lambda}^A\log \frac{2\bar{\lambda}^A}{\bar{\lambda}^A+\bar{\lambda}^B} + \bar{\lambda}^B\log \frac{2\bar{\lambda}^B}{\bar{\lambda}^A+\bar{\lambda}^B}\hspace{1cm} 
     \end{split}
     \end{equation}
     where $D_{KL}(\bar{\lambda}^B||\bar{\lambda}^M)$ is the Kullback-Leibler divergence of a $\text{Poisson}(\bar{\lambda}^B)$ distribution from $\text{Poisson}(\bar{\lambda}^M)$
\end{theorem}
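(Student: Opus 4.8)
The plan is to reduce the statement to the single-process analysis of Theorem~\ref{thm:single_hypothesis_test}, applied twice with a data-dependent ``null.'' Recall from Theorem~\ref{thm:compositemartingale} that $\log M(n,L;\phi) = \phi\log\phi - (\phi+n)\log(\phi+L) + \log\Gamma(\phi+n) - \log\Gamma(\phi) + L$, so that by inspection of the definition of $E(t)$,
\[
E(t) = M\big(N^A(t),\hat\Lambda(t);\phi\big)\, M\big(N^B(t),\hat\Lambda(t);\phi\big),
\]
i.e.\ $E(t)$ is a product of two terms of exactly the form appearing in $M^{\Lambda_0}(t)$, except that the deterministic baseline $\Lambda_0(t)$ is replaced by $\hat\Lambda(t)=\tfrac12(N^A(t)+N^B(t))$. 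The key observation is that the asymptotic computation behind Theorem~\ref{thm:single_hypothesis_test} is purely analytic once the relevant time-averages converge: it uses only $N(t)/t\to\bar\lambda$ and $\Lambda_0(t)/t\to\bar\lambda_0$ — via Stirling's formula applied to $\log\Gamma(\phi+N(t))$ and the elementary estimate $(\phi+n)\log(\phi+L)=n\log L + O(\log t)$ — and nowhere uses that $\Lambda_0$ is nonrandom. Hence, pathwise on the event where $N^i(t)/t\to\bar\lambda^i$ and $\hat\Lambda(t)/t\to\bar\lambda^M$, the same manipulation gives, for each $i\in\{A,B\}$,
\[
\frac{\log M\big(N^i(t),\hat\Lambda(t);\phi\big)}{t} = \frac{N^i(t)}{t}\log\frac{N^i(t)/t}{\hat\Lambda(t)/t} - \frac{N^i(t)-\hat\Lambda(t)}{t} + O\!\left(\tfrac{\log t}{t}\right) \;\overset{a.s.}{\to}\; D_{KL}(\bar\lambda^i\|\bar\lambda^M).
\]

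To make this rigorous I would first record the strong law $N^i(t)/t\to\bar\lambda^i$ a.s.\ for $i\in\{A,B\}$ — the same statement invoked in the proof of Theorem~\ref{thm:single_hypothesis_test}, which follows because $N^i(t)-\Lambda^i(t)$ is a martingale whose variance is controlled by the boundedness of $\lambda^i$, together with a maximal inequality, Borel--Cantelli along integer times, and monotonicity of $N^i$. Since $\mathcal T^A$ and $\mathcal T^B$ are independent, both strong laws hold on a common probability-one event, on which $\hat\Lambda(t)/t=\tfrac12(N^A(t)/t+N^B(t)/t)\to\tfrac12(\bar\lambda^A+\bar\lambda^B)=\bar\lambda^M$. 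On that event I would then expand $\log E(t)=\log M(N^A(t),\hat\Lambda(t);\phi)+\log M(N^B(t),\hat\Lambda(t);\phi)$, apply Stirling to each $\log\Gamma(\phi+N^i(t))$ term (the case where some $N^i(t)$ stays bounded being immediate), verify that all remainder terms are $O(\log t)$ and so vanish after dividing by $t$, and collect terms exactly as in the single-process proof, so that each factor contributes $D_{KL}(\bar\lambda^i\|\bar\lambda^M)$.

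Summing the two limits yields $\log E(t)/t \overset{a.s.}{\to} D_{KL}(\bar\lambda^A\|\bar\lambda^M)+D_{KL}(\bar\lambda^B\|\bar\lambda^M)$, and the explicit second line of~\eqref{eq:asymp_growth} follows by writing $D_{KL}(\bar\lambda^i\|\bar\lambda^M)=\bar\lambda^i\log(\bar\lambda^i/\bar\lambda^M)-(\bar\lambda^i-\bar\lambda^M)$, noting that $(\bar\lambda^A-\bar\lambda^M)+(\bar\lambda^B-\bar\lambda^M)=0$ so the linear corrections cancel in the sum, and substituting $\bar\lambda^M=\tfrac12(\bar\lambda^A+\bar\lambda^B)$. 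I expect the main obstacle to be precisely the justification that the analysis of Theorem~\ref{thm:single_hypothesis_test} transfers verbatim when its ``null'' baseline $\Lambda_0(t)$ is replaced by the random quantity $\hat\Lambda(t)$; the resolution is that, after restricting to the probability-one event on which the two strong laws hold, nothing probabilistic remains — $M(\cdot,\cdot;\phi)$ is a fixed deterministic function evaluated along arguments whose time-averages converge, so the limit is a matter of elementary real analysis, with only mild care needed at the boundary $\bar\lambda^i=0$ or $\bar\lambda^M=0$, handled by the conventions $0\log 0=0$ and $x\log x\to 0$ as $x\to 0^+$.
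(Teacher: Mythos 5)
Your proposal is correct and follows essentially the same route as the paper's proof: both rest on the strong law $N^i(t)/t\to\bar\lambda^i$ (Lemma~\ref{lemma:stronglaw}) plus Stirling's approximation applied to the closed form of $M(n,L;\phi)$ evaluated at $L=\hat\Lambda(t)$, with the lower-order terms absorbed after dividing by $t$. Your only organizational difference is treating $E(t)=M(N^A(t),\hat\Lambda(t);\phi)\,M(N^B(t),\hat\Lambda(t);\phi)$ factor by factor via the (pathwise, hence valid with the random centre $\hat\Lambda(t)$) single-process computation of Theorem~\ref{thm:single_hypothesis_test}, whereas the paper expands the product jointly; the content of the argument is the same.
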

\noindent Note the term on the right hand side is zero when $\bar{\lambda}^A = \bar{\lambda}^B$ and positive otherwise. 
As a consequence, when the assumptions of theorem \ref{thm:asymp_growth} hold with $\bar{\lambda}^A \neq \bar{\lambda}^B$ then our test will eventually reject the equality hypothesis $\lambda^A=\lambda^B$ with probability one:
$$
\mathbb P[\exists t>0:E(t)\geq\alpha^{-1}]=1.
$$
And, how quickly we reject depends on the divergence between $\bar{\lambda}^A$ and $\bar{\lambda}^B$ as defined in equation \ref{eq:asymp_growth} (which looks almost like a Jenson-Shannon divergence except that $\text{Poisson}(\bar{\lambda}^M)$ is not the mixture of $\text{Poisson}(\bar{\lambda}^A)$ and $\text{Poisson}(\bar{\lambda}^B)$).

\subsection{Comparisons to Other Methods}
\label{sec:comparisons}

We now compare to two other possible alternative methods.

\subsubsection{\citet{anytimecounts}}

\citet{anytimecounts} also proposed a sequential test for $\lambda^A(t)=\lambda^B(t)$. Their observation was that, at any point in time, notwithstanding inhomogeneity, if the equality hypothesis holds then the probability the next event arrives from process $B$ is $1/2$ due to the memoryless property of the Poisson process. This simplified the problem to observing a discrete sequence of independent Bernoulli's with probability $1/2$ under the null hypothesis. They proposed an $e$-process obtained by integrating the binomial likelihood ratio with respect to a conjugate $Beta(\alpha,\beta)$ mixture. We can extend their result to continuous time by writing an $e$-process as follows
\begin{equation}
    \tilde{E}(t) = \frac{B(N^A(t)+\beta, N^B(t)+\alpha)}{B(\alpha,\beta)}2^{N^A(t)+N^B(t)},
\end{equation}
where $B(\alpha,\beta) = \Gamma(\alpha)\Gamma(\beta)/\Gamma(\alpha+\beta)$. 
We can further extend their results by characterising the asymptotic growth rate of this $e$-process.

\begin{theorem}
\label{thm:asymp_growth2}
    Under the assumptions of theorem \ref{thm:asymp_growth}, for any $\alpha,\beta > 0$
     \begin{equation}
     \label{eq:asymp_growth2}
     \begin{split}
                         &\frac{\log \tilde{E}(t)}{t}\overset{a.s.}{\rightarrow} D_{KL}(\bar{\lambda}^B||\bar{\lambda}^M)+D_{KL}(\bar{\lambda}^A||\bar{\lambda}^M)\\
                         &=\bar{\lambda}^A\log \frac{2\bar{\lambda}^A}{\bar{\lambda}^A+\bar{\lambda}^B} + \bar{\lambda}^B\log \frac{2\bar{\lambda}^B}{\bar{\lambda}^A+\bar{\lambda}^B}\hspace{1cm} 
     \end{split}
     \end{equation}
     where $D_{KL}(\bar{\lambda}^B||\bar{\lambda}^M)$ is the Kullback-Leibler divergence of a $\text{Poisson}(\bar{\lambda}^B)$ distribution from $\text{Poisson}(\bar{\lambda}^M)$
\end{theorem}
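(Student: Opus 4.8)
The plan is to take logarithms of $\tilde E(t)$, apply Stirling's formula to the Gamma functions, and pass to the limit using the strong law of large numbers for the two counting processes. Using $B(a,b)=\Gamma(a)\Gamma(b)/\Gamma(a+b)$, I would first rewrite
\[
\log\tilde E(t)=\log\Gamma(N^A(t)+\beta)+\log\Gamma(N^B(t)+\alpha)-\log\Gamma(N^A(t)+N^B(t)+\alpha+\beta)+(N^A(t)+N^B(t))\log 2-\log B(\alpha,\beta).
\]
Under the assumptions of Theorem~\ref{thm:asymp_growth} the strong law for Poisson processes together with $\Lambda^i(t)/t\to\bar\lambda^i$ gives $N^i(t)/t\overset{a.s.}{\rightarrow}\bar\lambda^i$ for $i\in\{A,B\}$ (exactly as in the proof of Theorem~\ref{thm:single_hypothesis_test}); when $\bar\lambda^i>0$ this also forces $N^i(t)\to\infty$ almost surely.

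On that event I would apply $\log\Gamma(x)=x\log x-x+O(\log x)$ to each of the three Gamma terms. The additive constants $\alpha,\beta$ perturb each $x\log x$ term only by $O(\log t)$; the linear contributions $-N^A(t)-N^B(t)+(N^A(t)+N^B(t))$ cancel; and — the key cancellation — the $\log t$-order pieces inside $N^A\log N^A$, $N^B\log N^B$ and $-(N^A+N^B)\log(N^A+N^B)$ cancel because $N^A+N^B$ enters the third term with the opposite sign. What remains is
\[
\log\tilde E(t)=N^A(t)\log\frac{2N^A(t)}{N^A(t)+N^B(t)}+N^B(t)\log\frac{2N^B(t)}{N^A(t)+N^B(t)}+o(t).
\]

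Dividing by $t$ and using the SLLN limits together with continuity of $(x,y)\mapsto x\log\frac{2x}{x+y}+y\log\frac{2y}{x+y}$ (with the convention $0\log 0=0$) then yields
\[
\frac{\log\tilde E(t)}{t}\overset{a.s.}{\rightarrow}\bar\lambda^A\log\frac{2\bar\lambda^A}{\bar\lambda^A+\bar\lambda^B}+\bar\lambda^B\log\frac{2\bar\lambda^B}{\bar\lambda^A+\bar\lambda^B},
\]
and a one-line algebraic check using $\bar\lambda^A+\bar\lambda^B-2\bar\lambda^M=0$ identifies the right-hand side with $D_{KL}(\bar\lambda^A\|\bar\lambda^M)+D_{KL}(\bar\lambda^B\|\bar\lambda^M)$, which is \eqref{eq:asymp_growth2}. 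I would also remark that the same Stirling computation applied to the $e$-process $E$ of Theorem~\ref{thm:e_process} shows $\log E(t)-\log\tilde E(t)=O(\log t)$, which is the structural reason the two growth rates agree.

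The step I expect to be the main obstacle is the careful bookkeeping of the Stirling remainders: verifying that after dividing by $t$ every $O(\log t)$ and $O(1)$ term vanishes and that the potentially divergent $\log t$-order contributions cancel \emph{exactly}, rather than leaving a residual $c\log t$ term. A secondary point needing care is the boundary case $\bar\lambda^A=0$ or $\bar\lambda^B=0$, where the corresponding counting process need not diverge and Stirling cannot be applied termwise; there one bounds the offending terms directly (using, e.g., $N^A(t)\log\frac{N^A(t)}{N^A(t)+N^B(t)}=o(t)$ whenever $N^A(t)=o(t)$, and that $\log\Gamma(N^A(t)+\beta)$ stays bounded if $N^A(t)$ does) and checks that the stated limit still holds under the $0\log 0=0$ convention.
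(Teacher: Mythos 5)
Your proposal is correct and follows essentially the same route as the paper's proof: rewrite $\log\tilde E(t)$ via the Gamma functions, invoke the strong law (the paper's Lemma~\ref{lemma:stronglaw}) to get $N^i(t)/t\overset{a.s.}{\to}\bar\lambda^i$, apply Stirling's approximation, and observe that the linear and $\log t$-order terms cancel exactly, leaving the stated divergence after dividing by $t$. The only difference is bookkeeping (you keep $N^A(t),N^B(t)$ and use continuity, while the paper substitutes $N^i(t)=\bar\lambda^i t+\epsilon^i(t)$ and cancels termwise), plus your extra care about the degenerate case $\bar\lambda^i=0$, which the paper's argument implicitly assumes away.
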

This asymptotic growth rate is identical to the rate in theorem \ref{thm:asymp_growth}. The advantage of our proposal, however, is that while it enjoys the same asymptotic growth rate for testing the equality hypothesis, it additionally also provides \textit{simultaneous} confidence processes on $\Lambda^A(t)$, $\Lambda^B(t)$ and $\Lambda^B(t)-\Lambda^A(t)$. Moreover, these are interpretable and visualizable.

\subsubsection{\citet{asymptoticcs}}
If the only goal of inference is to test the null $\Lambda^B(t)-\Lambda^A(t) = 0$ for all $t>0$, then the reader may wonder if our test, obtained by way of constructing a confidence process on $(\Lambda^A(t), \Lambda^B(t))$ first, may be suboptimal in comparison to a method which targets $\Lambda^B(t)-\Lambda^A(t)$ directly. In this section, we consider strong approximations to difference in counts $N^B(t) - N^A(t)$, approximating it as (scaled) Brownian motion with drift and leveraging the asymptotic confidence sequences of \citet{asymptoticcs}. These asymptotic confidence sequences provide a sequential analogue of fixed sample size tests based on asymptotic normality via the central limit theorem. In this section we demonstrate that this strategy actually leads to a \textit{slower} asymptotic growth compared to our proposal. This is on top of this strategy also only being approximate. For example, in the discrete-time setting, such confidence sequences are only guaranteed to provide $\alpha$-coverage in an asymptotic regime where we wait longer and longer before we even start testing \citep{bibaut2022}.

Consider discretizing time into unit intervals and consider the sequence of Poisson random variables $z^B_i = N^B(i) - N^B(i-1)$ and $z^A_i = N^A(i) - N^A(i-1)$ with means $\Lambda^B_i \coloneqq \int_{i-1}^i \lambda^B(s)ds$ and $\Lambda^A_i \coloneqq \int_{i-1}^i \lambda^B(s)ds$, respectively. One could then define the sequence of differences $Y_i \coloneqq z^B_i - z^A_i$,  noting the partial sum process $\sum_{i=1}^t Y_i = N^B(t) - N^A(t)$ and cumulative variance process $\sum_{i=1}^t \Lambda^B_i + \Lambda^A_i = \Lambda^B(t) + \Lambda^A(t)$, and apply the asymptotic confidence sequences of \citet{asymptoticcs}. To apply their result, a few regularity conditions are required. For simplicity we assume that $\lambda^B(t)$ and $\lambda^A(t)$ are bounded and that $\Lambda^B(t) / t \rightarrow \bar{\lambda}^B$ and $\Lambda^A(t) / t \rightarrow \bar{\lambda}^A$. These assumptions ensure that the cumulative variance process diverges and that consistent variance estimation is achieved in the sense $((1/t)\sum_i^t z^B_i + z^A_i) / ( (1/t)\sum_i^t \Lambda^B_i + \Lambda^A_i) = \frac{N^B(t)+N^A(t)}{\Lambda^B(t) + \Lambda^A(t)}\overset{a.s.}{\rightarrow} 1$. This yields an (asymptotic) $e$-process, for any mixture precision $\phi > 0$, given by
\begin{equation}
\label{eq:asympmsprt}
    E^A(t) = \sqrt{\frac{\phi}{\phi + t}}e^{\frac{1}{2}\frac{t}{t+\phi}Z(t)^2}
\end{equation}
where $Z(t) = \frac{N^B(t)-N^A(t)}{\sqrt{N^B(t)+N^A(t)}}$. The expression in equation \eqref{eq:asympmsprt} takes the form of a Gaussian-mixture sequential probability ratio test statistic. Our next result shows that the asymptotic growth rate of this $e$-process is actually \textit{slower} than the $e$-processes of the preceding sections.

\begin{theorem}
\label{thm:asymp_growth3}
    Under the assumptions of theorem \ref{thm:asymp_growth}, for any $\phi > 0$
     \begin{equation}
     \label{eq:asymp_growth3}
                         \underset{t\rightarrow\infty}{\lim}\frac{\log E^A(t)}{t}=\frac{1}{2}\frac{(\bar{\lambda}^B - \bar{\lambda}^A)^2}{\bar{\lambda}^B + \bar{\lambda}^A}\hspace{1cm} {a.s.}
     \end{equation}
     which is the Kullback-Leibler divergence of a $N(\bar{\lambda}^B - \bar{\lambda}^A, \bar{\lambda}^B + \bar{\lambda}^A)$ distribution from a $N(0, \bar{\lambda}^B + \bar{\lambda}^A)$ distribution.
\end{theorem}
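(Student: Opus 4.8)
The plan is to pass to the logarithm and reduce everything to the strong law of large numbers for the two counting processes. Since
\[
\log E^A(t) = \tfrac12\log\frac{\phi}{\phi+t} + \tfrac12\,\frac{t}{t+\phi}\,Z(t)^2 ,
\]
dividing by $t$ leaves three pieces to control: the deterministic term $\frac{1}{2t}\log\frac{\phi}{\phi+t}$, which is $O(t^{-1}\log t)$ and hence vanishes; the factor $\frac{t}{t+\phi}\to 1$; and the term
\[
\frac1t Z(t)^2 = \frac{\big(N^B(t)-N^A(t)\big)^2/t^2}{\big(N^B(t)+N^A(t)\big)/t},
\]
which carries the limit. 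So the whole statement follows once I show the numerator and denominator of this last ratio converge almost surely.

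The key input is $N^i(t)/t \overset{a.s.}{\to}\bar\lambda^i$ for $i\in\{A,B\}$, which is exactly the Poisson strong law already used in the proof of Theorem~\ref{thm:single_hypothesis_test}: because $N^i(t)\sim\mathrm{Poisson}(\Lambda^i(t))$ one has $N^i(t)/\Lambda^i(t)\to 1$ a.s. whenever $\Lambda^i(t)\to\infty$ (Borel--Cantelli along integer times together with monotonicity of $N^i$ between integer times), and $\Lambda^i(t)/t\to\bar\lambda^i$ by hypothesis; the degenerate case $\bar\lambda^i=0$ is handled directly, since then $\mathbb{E}[N^i(t)]/t\to 0$ and a Markov/Borel--Cantelli argument gives $N^i(t)/t\to 0$ a.s. Combining, $(N^B(t)-N^A(t))/t\to\bar\lambda^B-\bar\lambda^A$ and $(N^B(t)+N^A(t))/t\to\bar\lambda^B+\bar\lambda^A$ almost surely, and (assuming $\bar\lambda^A+\bar\lambda^B>0$, the only case in which $Z(t)$ is eventually well defined — when both limits vanish the rates agree and the claimed limit is read off as $0$) continuity of $(x,y)\mapsto x^2/y$ at the limit point gives $\frac1t Z(t)^2 \overset{a.s.}{\to} \frac{(\bar\lambda^B-\bar\lambda^A)^2}{\bar\lambda^B+\bar\lambda^A}$. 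Hence $\frac{\log E^A(t)}{t}\overset{a.s.}{\to} \frac12\frac{(\bar\lambda^B-\bar\lambda^A)^2}{\bar\lambda^B+\bar\lambda^A}$.

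Finally I would recall that the Kullback--Leibler divergence between two Gaussians with common variance $\sigma^2$ and means $\mu_1,\mu_0$ equals $(\mu_1-\mu_0)^2/(2\sigma^2)$; taking $\mu_1-\mu_0=\bar\lambda^B-\bar\lambda^A$ and $\sigma^2=\bar\lambda^B+\bar\lambda^A$ identifies the limit with $D_{KL}\!\big(N(\bar\lambda^B-\bar\lambda^A,\,\bar\lambda^B+\bar\lambda^A)\,\|\,N(0,\,\bar\lambda^B+\bar\lambda^A)\big)$, as claimed. There is no substantial obstacle here — the result is a continuity computation sitting on top of the Poisson strong law — and the only points deserving care are confirming the $O(t^{-1}\log t)$ prefactor is genuinely negligible and the bookkeeping for the possibly degenerate limits $\bar\lambda^i=0$, which the direct Markov bound above disposes of.
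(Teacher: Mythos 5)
Your argument is correct and is essentially the paper's own proof: both divide the closed form $\log E^A(t)=\tfrac12\log\tfrac{\phi}{\phi+t}+\tfrac12\tfrac{t}{t+\phi}Z(t)^2$ by $t$, discard the $O(t^{-1}\log t)$ mixture term, and apply the strong law $N^i(t)/t\overset{a.s.}{\to}\bar\lambda^i$ (Lemma~\ref{lemma:stronglaw}, which you may cite directly rather than re-sketching a Borel--Cantelli argument, whose naive integer-time version with Chebyshev is not summable as stated) to conclude $\tfrac1t Z(t)^2\to(\bar\lambda^B-\bar\lambda^A)^2/(\bar\lambda^B+\bar\lambda^A)$. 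Your additional bookkeeping for the degenerate case $\bar\lambda^A+\bar\lambda^B=0$ and the Gaussian KL identification are fine but add nothing beyond the paper's treatment.
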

\noindent This limit is in fact \textit{slower} than our rate in theorem \ref{thm:asymp_growth}. Figure \ref{fig:limit_ratio} shows the relative magnitude of these two limits. When the difference between $\bar{\lambda}^A$ and $\bar{\lambda}^B$ is large, the asymptotic growth rate of $E(t)$ is considerably faster than $E^A(t)$. Figure \ref{fig:e_process_realizations} shows 20 realizations $\frac{\log E^A(t)}{t}$ and $\frac{\log E(t)}{t}$ and the theoretical limits from equations \eqref{eq:asymp_growth} and \eqref{eq:asymp_growth3}. The slower rate should not be interpreted as a shortcoming of asymptotic confidence sequences, rather, it should be attributed to considering only a single sequence of differences when two independent sequences are available. Asymptotic procedures are, nonetheless, approximate and are therefore at risk of incorrectly rejecting the null hypothesis for small $t$, whereas our proposal is nonasymptotic and exactly valid over all $t>0$.

\begin{figure}
    \centering
    \includegraphics[width=0.5\linewidth]{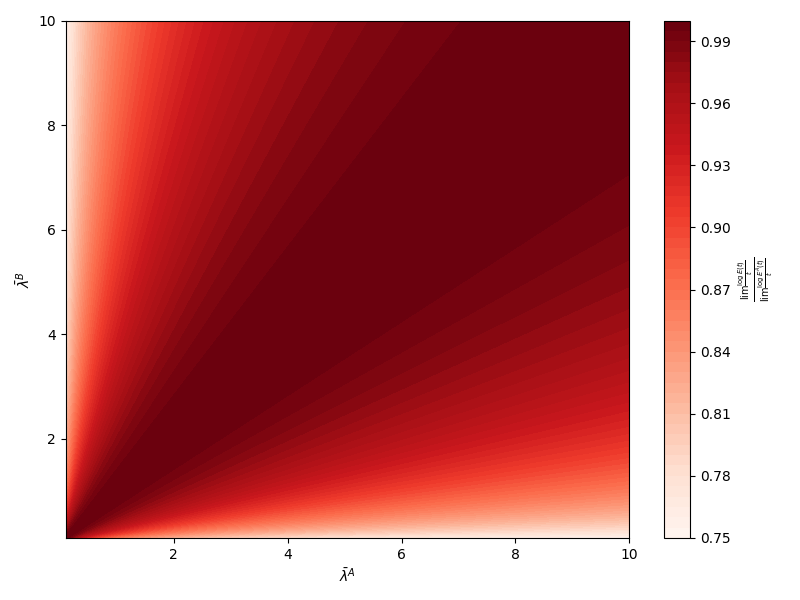}
    \caption{The value of $\underset{t\rightarrow\infty}{\lim}\frac{\log E^A(t)}{t} / \underset{t\rightarrow\infty}{\lim}\frac{\log E(t)}{t}$ for various $\bar{\lambda}^A$ and $\bar{\lambda}^B$}
    \label{fig:limit_ratio}
\end{figure}

\begin{figure}
    \centering
    \includegraphics[width=\linewidth]{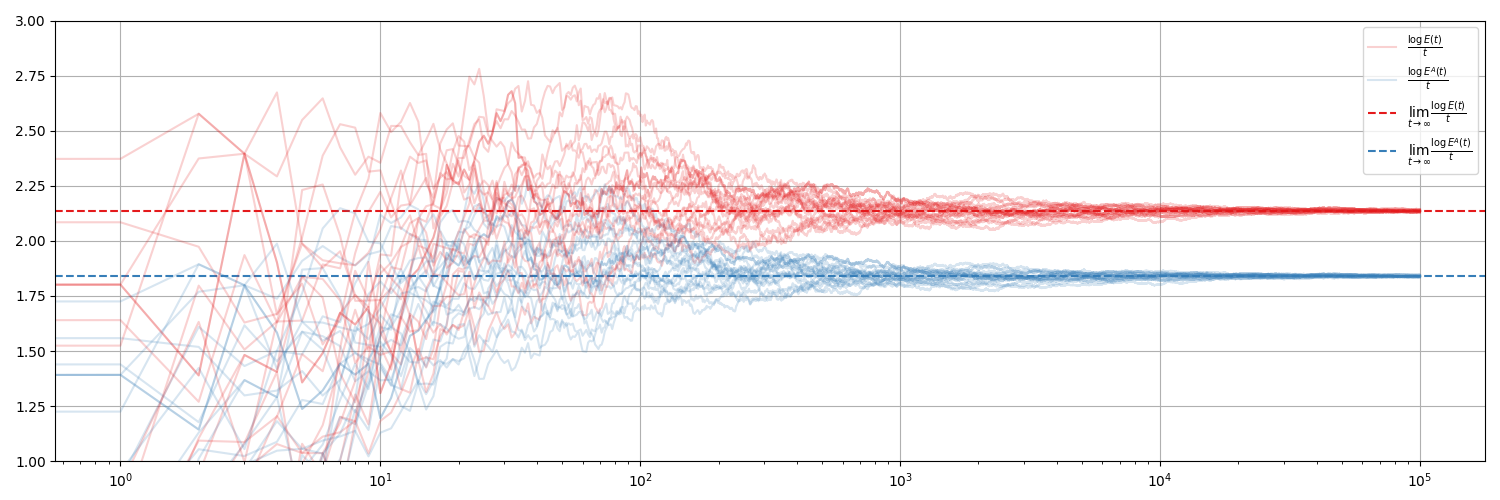}
    \caption{20 Realizations of $\frac{\log E^A(t)}{t}$ (blue) and $\frac{\log E(t)}{t}$ (red) and their associated limits from theorems \ref{thm:asymp_growth} and \ref{thm:asymp_growth3}. Simulated with $\lambda^A(t) = 0.1\lambda^B(t)$ and $\lambda^B(t) = 5 + \sin(2\pi t)$.}
    \label{fig:e_process_realizations}
\end{figure}

\section{Application: Software A/B Test}
\label{sec:application}
The following dataset is taken from an engineering A/B test on Android phone and tablet devices, provided with consent from a large internet streaming company. It was hypothesized that a specific code rewrite could increase playback speed and reduce the number of errors produced. Unfortunately, the new version of the code deployed to devices in the treatment group contained an unintended bug that in fact caused playback to crash. Figure \ref{fig:customer_rugplot} shows a dramatic increase in customer support calls classified as streaming problems.  Figure \ref{fig:customer_pvals} shows the simultaneous $0.95$ confidence process on $\Lambda^B(t) - \Lambda^A(t)$ in addition to the sequential $p$-value, where a difference is almost immediately detected at the $\alpha=0.05$ level. The raw timestamps have been scrubbed from figures to protect sensitive information. The simultaneous confidence process on the individual $\Lambda^B(t)$ and $\Lambda^A(t)$ are given in figure \ref{fig:customer_intensities}.

\begin{figure}
    \centering
    \includegraphics[width=\linewidth]{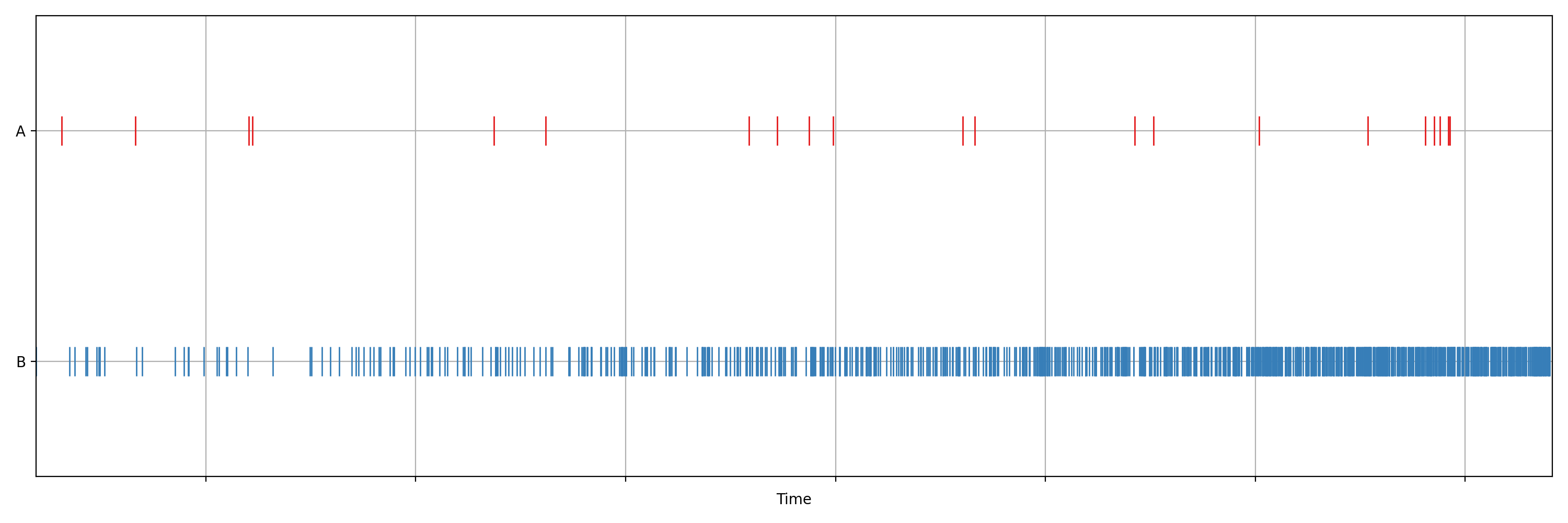}
    \caption{Arrival timestamps of customer support calls for each treatment group}
    \label{fig:customer_rugplot}
\end{figure}

\begin{figure}
    \centering
    \includegraphics[width=\linewidth]{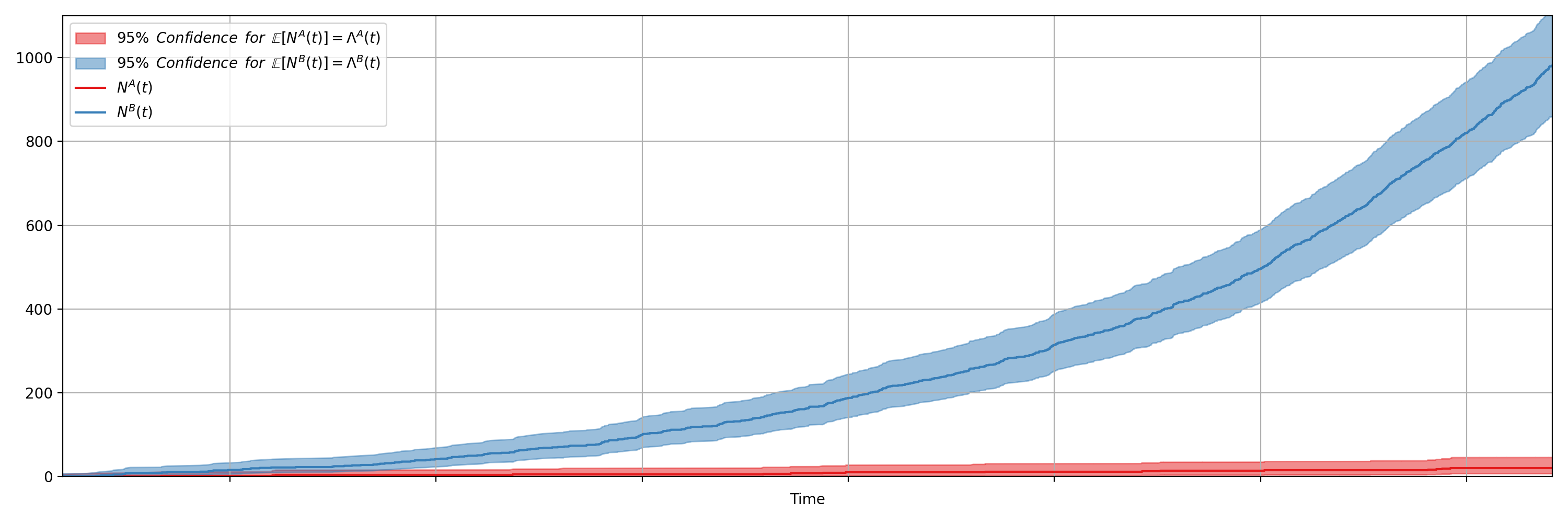}
    \caption{Counting and confidence processes on $\Lambda^B(t)$ and $\Lambda^A(t)$.}
    \label{fig:customer_intensities}
\end{figure}

\begin{figure}
    \centering
    \includegraphics[width=\linewidth]{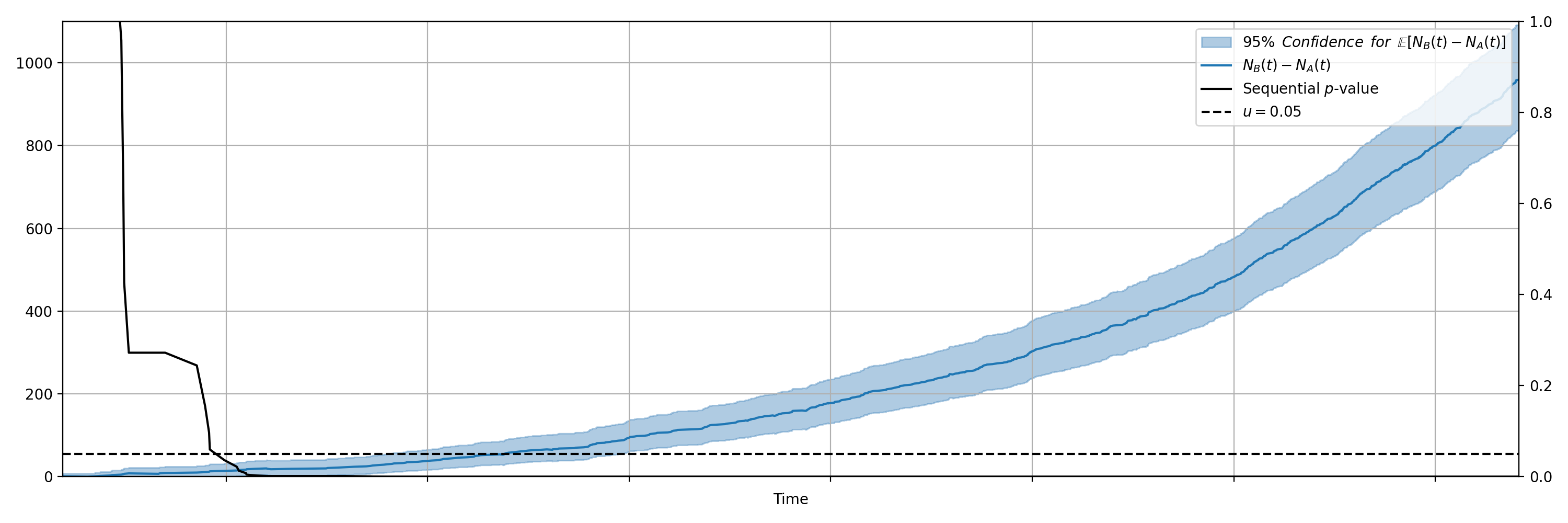}
    \caption{Confidence process on $\Lambda^B(t) - \Lambda^A(t)$ (left y-axis) and sequential $p$-value (right y-axis)}
    \label{fig:customer_pvals}
\end{figure}

\section{Conclusion}
Motivated by the need to continuously monitor arrival processes across treatment arms of a randomized experiment, we provided an anytime valid inference framework for inhomogeneous Poisson point processes. This framework provides confidence sequences on $(\Lambda^A(t), \Lambda^B(t))$, which in turn provide \textit{simultaneous} confidence sequences on $\Lambda^A(t)$, $\Lambda^B(t)$ and $\Lambda^B(t)-\Lambda^A(t)$. These confidence sequences can be used to construct $e$-processes for testing hypotheses about the intensity measures, a construction which we show is equivalent to the universal inference $e$-process. We demonstrate that this test is power 1 for a broad class of alternatives, and that the asymptotic growth rate of the $e$-process outperforms nonparametric asymptotic procedures which construct confidence sequences on the differences directly. This methodology has been successfully deployed at a large internet streaming company where it is used to monitor all active A/B tests, which we illustrate using an example where this methodology detected a increase in customer call center volume.

\bibliographystyle{agsm}
\bibliography{main}
\appendix

\section{Proofs}
\subsection{Useful lemmas}
\begin{lemma}
\label{lem:stirling}
    Stirling's Approximation
    \begin{equation}
        \log\Gamma(x) = \left(x - \frac{1}{2}\right)\log(x) - x - 2\pi + o(1),
    \end{equation}
    as $x\rightarrow \infty$.
\end{lemma}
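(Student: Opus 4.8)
The plan is to derive this from the Euler integral for the Gamma function by Laplace's method (the saddle-point approximation). Starting from $\Gamma(x+1) = \int_0^\infty t^x e^{-t}\,dt = \int_0^\infty e^{\,x\log t - t}\,dt$, I would note that the exponent $h(t) = x\log t - t$ is maximized at $t = x$, with $h(x) = x\log x - x$ and $h''(x) = -1/x$. Rescaling $t = x + \sqrt{x}\,u$ centers the integral at this peak and rewrites it as $x^x e^{-x}\sqrt{x}\int_{-\sqrt{x}}^{\infty} \exp\!\big(x\log(1 + u/\sqrt{x}) - \sqrt{x}\,u\big)\,du$. A second-order Taylor expansion of $x\log(1+u/\sqrt{x}) - \sqrt{x}\,u$ about $u=0$ gives $-u^2/2 + O(u^3/\sqrt{x})$, so the integrand converges pointwise to the Gaussian kernel $e^{-u^2/2}$.

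Next I would make the passage to the limit rigorous by splitting the $u$-integral into the central region $|u| \le x^{1/6}$ and its complement. On the central region the cubic remainder is $O(x^{-1/2 + 1/2}) = o(1)$ uniformly, so that piece of the integral tends to $\int_{-\infty}^\infty e^{-u^2/2}\,du = \sqrt{2\pi}$; on the complement one bounds the integrand by a fixed integrable function — using concavity of $\log$ to obtain clean exponential decay — and invokes dominated convergence to conclude its contribution is negligible. This yields $\Gamma(x+1) = \sqrt{2\pi x}\,x^x e^{-x}\,(1+o(1))$, hence $\log \Gamma(x+1) = (x + \tfrac12)\log x - x + \tfrac12\log(2\pi) + o(1)$, and finally $\log\Gamma(x) = \log\Gamma(x+1) - \log x = (x - \tfrac12)\log x - x + \tfrac12\log(2\pi) + o(1)$.

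The main obstacle is the uniform control in the Laplace step: showing that only an $O(\sqrt{x})$-neighborhood of $t = x$ contributes asymptotically and that the cubic error incurred there vanishes — everything else is bookkeeping. An alternative route avoiding the integral is to apply the Euler–Maclaurin summation formula to $\log n! = \sum_{k=1}^{n}\log k$ and then extend to real argument via the recursion $\Gamma(x+1) = x\Gamma(x)$ and log-convexity (Bohr–Mollerup), but the Laplace argument above is the most self-contained. I would also remark in passing that the additive constant in the displayed statement should read $\tfrac12\log(2\pi)$ rather than $-2\pi$; since the lemma is only ever used inside expressions of the form $t^{-1}\log\Gamma(\cdot)$ with $t\to\infty$, where the constant and $o(1)$ terms wash out, the exact value is immaterial downstream.
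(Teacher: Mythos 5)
The paper itself offers no proof of this lemma: it is listed under ``Useful lemmas'' and invoked as the classical Stirling approximation, so your self-contained Laplace-method derivation is necessarily a different (and more complete) route than anything in the paper. Your argument is the standard and correct one: maximize $h(t)=x\log t - t$ at $t=x$, rescale $t=x+\sqrt{x}\,u$, show the central region contributes $\sqrt{2\pi}$ and the tails are negligible, then pass from $\Gamma(x+1)$ to $\Gamma(x)$ via $\Gamma(x+1)=x\Gamma(x)$. One small bookkeeping slip: with the cutoff $|u|\le x^{1/6}$ the cubic remainder is $O(|u|^3/\sqrt{x}) = O(x^{1/2}/x^{1/2}) = O(1)$, not $o(1)$ uniformly as you claim; either shrink the central window to $|u|\le x^{\epsilon}$ with $\epsilon<1/6$ (e.g.\ $x^{1/8}$), or keep $x^{1/6}$ and argue that the bounded multiplicative error $e^{O(|u|^3/\sqrt x)}$ still integrates to $\sqrt{2\pi}(1+o(1))$ by a dominated-convergence argument with an explicit envelope. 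This is easily repaired and does not affect the conclusion. You are also right that the displayed constant ``$-2\pi$'' in the lemma is a typo for $\tfrac12\log(2\pi)$, and right that this is harmless for the paper: every invocation (Theorems \ref{thm:single_hypothesis_test}, \ref{thm:asymp_growth}, \ref{thm:asymp_growth2}) divides $\log\Gamma(\cdot)$ by $t\to\infty$, so any additive constant is absorbed into the $o(1)$ terms.
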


\begin{lemma} Campbell's Theorem
\label{lemma:campbell}
\newline
Let $\mathcal{T} \sim \mathcal{P}(\lambda)$ and $R = \sum_{x \in \mathcal{T}} f(x)$ 
for some measurable function $f$ satisfying $$\int \min(|f(x)|,1)\lambda(x)dx < \infty,$$ then for complex $\theta \in \mathbb{C}$ 
\begin{equation}
    \mathbb{E}\left[e^{\theta R}\right] = e^{\int (e^{\theta f(x)} - 1)\lambda(x)dx}
\end{equation}
\end{lemma}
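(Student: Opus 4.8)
The plan is to prove the identity by the usual layered approximation for functionals of a Poisson process and then, to reach complex $\theta$, to extend the resulting formula in $\theta$ by analytic continuation. First I would record the preliminary fact that the hypothesis $\int\min(|f(x)|,1)\lambda(x)\,dx<\infty$ already forces $R=\sum_{x\in\mathcal T}f(x)$ to converge absolutely almost surely: by property~2 of the process, $\mathcal T$ has only finitely many points in $\{|f|>1\}$ almost surely, while the points in $\{|f|\le1\}$ contribute a sum whose expected absolute value is $\int_{\{|f|\le1\}}|f|\,\lambda\,dx<\infty$. Hence both sides of the claimed identity are well defined.

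Next I would establish the formula for \emph{simple} $f=\sum_{j=1}^{k}a_j\mathbf 1_{B_j}$, where the $B_j$ are disjoint Borel sets with $\Lambda(B_j)<\infty$ and $a_j\in\mathbb C$. In that case $R=\sum_{j=1}^{k}a_j N(B_j)$, the counts $N(B_j)$ are independent with $N(B_j)\sim\mathrm{Poisson}(\Lambda(B_j))$, and the elementary identity $\mathbb E[e^{wN}]=\exp(\mu(e^{w}-1))$ for $N\sim\mathrm{Poisson}(\mu)$ — valid for every $w\in\mathbb C$ because the defining series converges absolutely — gives, by independence,
\[
\mathbb E[e^{\theta R}]=\prod_{j=1}^{k}\exp\!\big(\Lambda(B_j)(e^{\theta a_j}-1)\big)=\exp\!\Big(\textstyle\sum_{j=1}^{k}\Lambda(B_j)(e^{\theta a_j}-1)\Big)=\exp\!\Big(\int(e^{\theta f(x)}-1)\lambda(x)\,dx\Big).
\]
I would then pass to general measurable $f$ by taking simple $f_n$ with $|f_n|\le|f|$ and $f_n\to f$ pointwise, so that $R_n:=\sum_x f_n(x)\to R$ almost surely. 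For $\theta$ in a favourable range — $\theta\le0$ real when $f\ge0$, $\theta$ purely imaginary for arbitrary real-valued $f$, and more generally after splitting $f$ into positive and negative parts — the variables $e^{\theta R_n}$ are bounded by $1$ and the functions $e^{\theta f_n}-1$ are dominated by a fixed multiple of $\min(|f|,1)$, so dominated convergence (or monotone convergence, where a side may be infinite) upgrades the simple-function identity to such $\theta$.

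The remaining step, which I expect to be the main obstacle, is to remove the restriction on $\theta$. The argument would be that $\theta\mapsto\mathbb E[e^{\theta R}]$ and $\theta\mapsto\exp(\int(e^{\theta f}-1)\lambda\,dx)$ are both holomorphic on $D$, the open set of $\theta$ for which $\int|e^{\theta f(x)}-1|\,\lambda(x)\,dx<\infty$, and they agree on the part of $D$ handled above, which has accumulation points, so the identity theorem closes the gap. Making this rigorous needs (i) that $D$ is open and that the right-hand side is finite and holomorphic there, and (ii) holomorphy of the left-hand side, which I would obtain from Morera's theorem together with a bound on $\mathbb E[|e^{\theta R}|]=\mathbb E[e^{\operatorname{Re}(\theta R)}]$ that is locally uniform on $D$ — and this last bound is itself the already-proved real-exponent case of the identity applied to the function $x\mapsto\operatorname{Re}(\theta f(x))$. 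Since the uses of this lemma elsewhere in the paper only ever involve real $\theta$ with $\int|e^{\theta f}-1|\,\lambda\,dx<\infty$, one may in fact stop after the second step if only those cases are required.
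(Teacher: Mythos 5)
The paper never proves this lemma: it is stated in the appendix as a known ``useful lemma'' (Campbell's theorem, as in the monograph of Kingman that the paper cites) and only the strong-law lemma gets a proof, so there is no in-paper argument to compare against. Your sketch follows the classical textbook route --- simple functions on disjoint sets, independence plus the Poisson moment generating function valid for all complex exponents, then a limiting argument --- which is exactly how the result is standardly proved, and you are right that the paper only ever invokes the lemma with a real exponent (in the proof of Theorem 3.1), for which your second step already suffices.

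The one place that needs repair is the analytic-continuation step as you have framed it. The set $D=\{\theta:\int|e^{\theta f(x)}-1|\lambda(x)\,dx<\infty\}$ is a vertical strip $\{\operatorname{Re}\theta\in I\}$ for some interval $I\ni 0$ (membership depends only on $\operatorname{Re}\theta$, since on $\{|f|\le 1\}$ the integrand is $O(|f|)$ and $\{|f|>1\}$ has finite $\lambda$-measure), and it need not be open --- your requirement ``(i) that $D$ is open'' is false in general, e.g.\ when $I=\{0\}$ or $I=[0,a]$. Worse, when $0$ is an endpoint of $I$, the cases you establish directly in step two (purely imaginary $\theta$, and $\theta\le 0$ for $f\ge 0$) all lie on the \emph{boundary} of the interior of $D$, so the identity theorem cannot be seeded from them. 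The fix is to first settle \emph{all real} $\theta$: the restrictions of $\mathcal{T}$ to $\{f>0\}$ and $\{f\le 0\}$ are independent Poisson processes, so $\mathbb{E}[e^{\theta R}]$ factorizes, and monotone convergence on each factor gives the identity for every real $\theta$, with both sides simultaneously finite or infinite (this is the precise content of your parenthetical about splitting and monotone convergence, and it should be carried out before, not inside, the continuation step). That yields agreement on a real segment with accumulation points in the open strip $\{\operatorname{Re}\theta\in\operatorname{int}I\}$, where your Morera-plus-local-bound argument does give holomorphy of both sides and the identity theorem applies; boundary points of the strip are then recovered by dominated convergence, dominating $|e^{\theta R}|=e^{\operatorname{Re}\theta\,R}$ by $e^{aR_+}$ with $a$ the relevant endpoint. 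Finally, the statement as printed should be read as holding for those $\theta\in\mathbb{C}$ with $\int|e^{\theta f}-1|\lambda\,dx<\infty$ (always true for purely imaginary $\theta$); with that reading and the amendments above, your outline is sound.
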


\begin{lemma} Strong Convergence of $\frac{1}{t}N(t)$
\label{lemma:stronglaw}
\newline
    Suppose $\frac{1}{t}\int_0^t\lambda(s) ds \rightarrow \bar{\lambda}$ with $\lambda(t) \leq \lambda_{\max}$ for all $t>0$, then $\frac{1}{t}N(t)\rightarrow \bar{\lambda}$ almost surely.
\end{lemma}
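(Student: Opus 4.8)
The plan is to prove the claim first along the integer grid $t = n \in \mathbb{N}$ by a Chernoff-bound-plus-Borel--Cantelli argument, and then to pass to arbitrary real $t$ using monotonicity of the sample paths of $N(\cdot)$.

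\emph{Step 1: convergence along integers.} Taking $f = \mathbf{1}_{[0,n]}$ in Campbell's theorem (Lemma~\ref{lemma:campbell}) gives $\mathbb{E}[e^{\theta N(n)}] = \exp\!\big((e^\theta - 1)\Lambda(n)\big)$ for every real $\theta$, with the integrability hypothesis satisfied since $\Lambda(n) < \infty$. Fix $\epsilon > 0$. For $\theta > 0$, Markov's inequality applied to $e^{\theta N(n)}$ yields
$$\mathbb{P}\big(N(n) \ge \Lambda(n) + \epsilon n\big) \le \exp\!\Big(\Lambda(n)\,(e^\theta - 1 - \theta) - \theta\epsilon n\Big) \le \exp\!\Big(n\big(\lambda_{\max}(e^\theta - 1 - \theta) - \theta\epsilon\big)\Big),$$
using $\Lambda(n) = \int_0^n \lambda(s)\,\mathrm{d}s \le \lambda_{\max} n$ and $e^\theta - 1 - \theta \ge 0$. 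Choosing $\theta = \theta(\epsilon) > 0$ small enough that $\lambda_{\max}(e^\theta - 1 - \theta) < \theta\epsilon$ (possible since $e^\theta - 1 - \theta = O(\theta^2)$) gives a bound $e^{-c_\epsilon n}$ with $c_\epsilon > 0$; the analogous argument with $\theta < 0$ controls the lower tail $\mathbb{P}(N(n) \le \Lambda(n) - \epsilon n)$. These bounds are summable in $n$, so Borel--Cantelli gives $|N(n) - \Lambda(n)| < \epsilon n$ for all large $n$ almost surely; intersecting over $\epsilon \in \{1/k\}_{k\ge 1}$ shows $(N(n) - \Lambda(n))/n \to 0$ a.s., and together with the hypothesis $\Lambda(n)/n \to \bar\lambda$ this gives $N(n)/n \to \bar\lambda$ a.s.

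\emph{Step 2: interpolation.} The sample paths of $N$ are nondecreasing, so for $t \in [n, n+1)$,
$$\frac{N(n)}{n}\cdot\frac{n}{n+1} \;=\; \frac{N(n)}{n+1} \;\le\; \frac{N(n)}{t} \;\le\; \frac{N(t)}{t} \;\le\; \frac{N(n+1)}{t} \;\le\; \frac{N(n+1)}{n} \;=\; \frac{N(n+1)}{n+1}\cdot\frac{n+1}{n}.$$
As $t \to \infty$ we have $n = \lfloor t\rfloor \to \infty$, so both the leftmost and rightmost quantities converge a.s.\ to $\bar\lambda$ by Step~1, and the squeeze theorem gives $N(t)/t \to \bar\lambda$ almost surely.

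\emph{Main obstacle.} There is no substantial difficulty here; the only point requiring care is the exponential tail bound in Step~1, specifically ensuring its exponent is summable against the deviation scale $\epsilon n$. This is exactly where the hypothesis $\lambda \le \lambda_{\max}$ (equivalently, that $\Lambda(n)/n$ stays bounded) is used; without such control the fluctuations of $N(n)$ about its mean need not be $o(n)$. Degenerate cases are harmless: when $\bar\lambda = 0$ or $\Lambda(n) = 0$ the lower-tail event is eventually (or always) empty and the argument goes through verbatim.
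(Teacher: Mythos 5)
Your proof is correct, but it reaches the integer-grid convergence by a different mechanism than the paper. The paper decomposes $N(n)=\sum_{i=1}^n N_i$ into independent unit-interval increments $N_i$ with means $\Lambda_i\leq\lambda_{\max}$, checks $\sum_i \mathbb{V}[N_i]/i^2<\infty$, and invokes Kolmogorov's strong law for independent random variables; you instead work with the marginal Poisson law of $N(n)$ itself (its MGF, obtained via Campbell's theorem, Lemma~\ref{lemma:campbell}, though it also follows directly from $N(n)\sim\text{Poisson}(\Lambda(n))$), derive an exponential Chernoff bound on both tails at scale $\epsilon n$, and conclude by Borel--Cantelli. The interpolation step for non-integer $t$ via monotonicity and the floor/ceiling sandwich is essentially identical to the paper's. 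The trade-offs: the paper's route is shorter and needs only second moments plus the independent-increments property, while yours avoids any appeal to a strong law, gives quantitative exponential concentration as a by-product, and in fact uses the boundedness of $\lambda$ only through $\limsup_n\Lambda(n)/n<\infty$, which is already implied by the hypothesis $\Lambda(t)/t\to\bar\lambda$ --- so your argument shows the $\lambda\leq\lambda_{\max}$ assumption is not strictly needed for this lemma, whereas the paper's variance-summability check uses it directly (though it too could be relaxed). Both arguments are sound; no gaps.
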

\begin{proof}
    Consider discretizing time into unit intervals, with $N_i = N(i)-N(i-1)$ and $\Lambda_i = \int_{i-1}^i \lambda(s) ds \leq \lambda_{\max}$. Note $\frac{1}{n}\sum_{i=1}^n N_i = N(n)$, similarly $\frac{1}{n}\sum_{i=1}^n \Lambda_i = \frac{1}{n}\int_0^n \lambda(s)ds$. The variances are summable in the sense
    \begin{equation*}
        \sum_{i=1}^\infty \frac{\mathbb{V}[N_i]}{i^2} = \sum_{i=1}^\infty \frac{\Lambda_i}{i^2} \leq \lambda_{\max}\frac{\pi^2}{6} < \infty
    \end{equation*}
    and so by Kolmogorov's strong law for independent random variables $\frac{1}{n}\sum_{i=1}^n N_i - \frac{1}{n}\sum_{i=1}^n \Lambda_i \overset{a.s.}{\rightarrow} 0$ $\Rightarrow$ $\frac{1}{n}\sum_{i=1}^n N_i  \overset{a.s.}{\rightarrow} \bar{\lambda}$. This demonstrates that $\frac{1}{n}N(n)\rightarrow \bar{\lambda}$. What remains is to show that $N(t)$ is controlled for $t$ between integer values. Note $N(t)/t$ is sandwiched between 
    \begin{equation*}
        \frac{N(\lfloor t \rfloor)}{t} \leq \frac{N(t)}{t} \leq \frac{N(\lceil t \rceil)}{t}.
    \end{equation*}
    The upper bound $\frac{N(\lceil t \rceil)}{t} = \frac{N(\lceil t \rceil)}{\lceil t \rceil}\frac{\lceil t \rceil}{t} \overset{a.s.}{\rightarrow} \bar{\lambda}$ and similarly the lower bound $\frac{N(\lfloor t \rfloor)}{t} = \frac{N(\lfloor t \rfloor)}{\lfloor t \rfloor}\frac{\lfloor t \rfloor}{t}\overset{a.s.}{\rightarrow} \bar{\lambda}$.
\end{proof}

\subsection{Proofs for section \ref{sec:one_process}}

\subsubsection{Proof of Theorem \ref{thm:simplemartingale}}
\label{proof:simplemartingale}
\begin{proof}
    For any time $t$ and $\delta >0$, the likelihood can be decomposed into a product of two likelihoods contributed from the observations over $[0,t]$ and $[t, t+\delta]$.
    \begin{equation*}
\begin{split}
    \frac{\mathcal{L}_{t+\delta}(\lambda)}{\mathcal{L}_{t+\delta}(\lambda_0)} &= \prod_{x \in \mathcal{T} \cap [0, t+\delta]}\frac{\lambda_1(x)}{\lambda_0(x)}e^{-\int_0^{t+\delta}\lambda_1(s) - \lambda_0(s)ds}\\
    &=\prod_{\mathcal{T} \cap [t, t+\delta]}\frac{\lambda_1(x)}{\lambda_0(x)}e^{-\int_t^{t+\delta}\lambda_1(s) - \lambda_0(s)ds} \prod_{\mathcal{T} \cap [0, t]}\frac{\lambda_1(x)}{\lambda_0(x)}e^{-\int_0^t\lambda_1(s) - \lambda_0(s)ds}\\
    &= \prod_{\mathcal{T} \cap [t, t+\delta]}\frac{\lambda_1(x)}{\lambda_0(x)}e^{-\int_t^{t+\delta}\lambda_1(s) - \lambda_0(s)ds} \frac{\mathcal{L}_t(\lambda)}{\mathcal{L}_t(\lambda_0)}.\\
\end{split}
\end{equation*}
The likelihood ratio at time $t$ is a constant with respect to the conditional expectation given $\mathcal{F}_t$.
The goal is to show the term that multiplies the likelihood ratio at time $t$ has expectation 1 conditional on the filtration $\mathcal{F}_t$.
\begin{equation}
    \begin{split}
        \mathbb{E}\left[  \prod_{\mathcal{T} \cap [t, t+\delta]}\frac{\lambda_1(x)}{\lambda_0(x)}e^{-\int_t^{t+\delta}\lambda_1(s) - \lambda_0(s)ds}  \mid \mathcal{F}_t\right] &=   \mathbb{E}\left[  \prod_{\mathcal{T} \cap [t, t+\delta]}\frac{\lambda_1(x)}{\lambda_0(x)}  \mid \mathcal{F}_t\right]  e^{-\int_t^{t+\delta}\lambda_1(s) - \lambda_0(s)ds}\\
        &=   \mathbb{E}\left[  \prod_{\mathcal{T} \cap [t, t+\delta]}\frac{\lambda_1(x)}{\lambda_0(x)}\mid \mathcal{F}_t\right]  e^{-\int_t^{t+\delta}\lambda_1(s) - \lambda_0(s)ds},\\
    \end{split}
\end{equation}
where the second line follows from independence of the process over $[0,t]$ and $[t, t+\delta]$. The expectation in the second line can be computed using lemma \ref{lemma:campbell} with $\theta = 1$ and $f(x) = e^{\log \frac{\lambda_1(x)}{\lambda_0(x)}}$. 

\begin{equation*}
\begin{split}
            \mathbb{E}\left[  \prod_{\mathcal{T} \cap [t, t+\delta]}\frac{\lambda_1(x)}{\lambda_0(x)}\mid \mathcal{F}_t \right] &=  e^{\int_t^{t+\delta}\left(\frac{\lambda_1(s)}{\lambda_0(s)} - 1\right)\lambda_0(s)ds}\\
            &=  e^{\int_t^{t+\delta}\lambda_1(s)-\lambda_0(s)ds}\\
\end{split}
\end{equation*}
Hence
\begin{equation*}
    \begin{split}
\mathbb{E}\left[ \frac{\mathcal{L}_{t+\delta}(\lambda)}{\mathcal{L}_{t+\delta}(\lambda_0)} \mid \mathcal{F}_t\right] &= \mathbb{E}\left[\prod_{\mathcal{T} \cap [t, t+\delta]}\frac{\lambda_1(x)}{\lambda_0(x)}\mid \mathcal{F}_t\right]  e^{-\int_t^{t+\delta}\lambda_1(s) - \lambda_0(s)ds} \frac{\mathcal{L}_{t}(\lambda)}{\mathcal{L}_{t}(\lambda_0)}\\
&=e^{\int_t^{t+\delta}\lambda_1(s) - \lambda_0(s)ds}  e^{-\int_t^{t+\delta}\lambda_1(s) - \lambda_0(s)ds}  \frac{\mathcal{L}_{t}(\lambda)}{\mathcal{L}_{t}(\lambda_0)}\\
&=\frac{\mathcal{L}_{t}(\lambda)}{\mathcal{L}_{t}(\lambda_0)}.\\
    \end{split}
\end{equation*}
\end{proof}

\subsubsection{Proof of Theorem \ref{thm:compositemartingale}}
\begin{proof}
    \begin{equation*}
\begin{split}
              M^{\Lambda_0}(t) = \mathbb{E}^\Pi\left[\frac{\mathcal{L}_{t}(\theta)}{\mathcal{L}_{t}(\theta_0)}\right] =\int \frac{\mathcal{L}_t(\theta)}{\mathcal{L}_t(\theta_0)}d\Pi(\theta) =& \frac{\phi^{\phi}}{\Gamma(\phi)} e^{\Lambda_0(t)}\int x^{N(t)}e^{-x\Lambda_0(t)}x^{\phi - 1}e^{-x\phi}dx \\
              =&  \frac{\phi^{\phi}}{(\phi + \Lambda_0(t))^{\phi + N(t)}} \frac{\Gamma(\phi + N(t))}{\Gamma(\phi)} e^{\Lambda_0(t)}
\end{split}
\end{equation*}

Moreover, by Fubini's theorem
\begin{equation*}
    \begin{split}
     \mathbb{E}[M^{\Lambda_0}(t+\delta) | \mathcal{F}_t] =&\mathbb{E}\left[\int \frac{\mathcal{L}_{t+\delta}(\theta)}{\mathcal{L}_{t+\delta}(\theta_0)}d\Pi(\theta) | \mathcal{F}_t\right]\\
     =&\int \mathbb{E}\left[\frac{\mathcal{L}_{t+\delta}(\theta)}{\mathcal{L}_{t+\delta}(\theta_0)} | \mathcal{F}_t \right] d\Pi(\theta)\\
          =&\int \frac{\mathcal{L}_{t}(\theta)}{\mathcal{L}_{t}(\theta_0)}  d\Pi(\theta)\\
     =& M^{\Lambda_0}(t)
\end{split}
\end{equation*}
\end{proof}

\subsubsection{Proof of Theorem \ref{thm:single_hypothesis_test}}
\begin{proof}
    The assumption $\Lambda(t) / t \rightarrow \bar{\lambda}$ implies $N(t) / t \rightarrow \bar{\lambda}$ almost surely by lemma \ref{lemma:stronglaw}. We therefore write $N(t) = \bar{\lambda}t + \epsilon(t)$ where $\epsilon(t)=o_{a.s.}(t)$. Similarly, the assumption $\Lambda_0(t) / t \rightarrow \bar{\lambda}_0$ allows us to write $\Lambda_0(t) = \bar{\lambda}_0 t + \epsilon_0(t)$ where $\epsilon_0(t) = o(t)$. From equation \eqref{eq:composite_martingale}
\begin{equation*}
    \begin{split}
        \frac{\log M^{\Lambda_0}(t)}{t} =& -\frac{(N(t) + \phi)}{t}\log(\Lambda_0(t) + \phi) + \frac{\Lambda_0(t)}{t} + \frac{1}{t}\log\Gamma(N(t)+\phi)+o(1)\\
        =& -(\bar{\lambda} + \frac{\epsilon(t) + \phi}{t})\log( \bar{\lambda}_0 t + \epsilon_0(t) + \phi) + \bar{\lambda}_0  + \frac{1}{t}\log\Gamma(\bar{\lambda}t + \epsilon(t)+\phi)+o_{a.s.}(1).
    \end{split}
\end{equation*}
From lemma \ref{lem:stirling}, Stirling's approximation gives
\begin{equation*}
    \begin{split}
        \frac{\log M^{\Lambda_0}(t)}{t}
        =& -\left(\bar{\lambda} + \frac{\epsilon(t) + \phi}{t}\right)\log(\bar{\lambda}_0 t + \epsilon_0(t) + \phi) + \bar{\lambda}_0  \\
        &+
\left(\bar{\lambda} + \frac{\epsilon(t)+\phi - \frac{1}{2}}{t}\right)\log(\bar{\lambda}t + \epsilon(t)+\phi)
        -\bar{\lambda} + o_{a.s.}(1).
    \end{split}
\end{equation*}
Note for $c>0$, $\log (ct+o_{a.s.}(t)) = \log (ct) + \log(1 + o_{a.s.}(t) / ct) = \log(ct) + o_{a.s.}(1)$, and so we can write
\begin{equation*}
    \begin{split}
        \frac{\log M^{\Lambda_0}(t)}{t}
        =& -\left(\bar{\lambda} + \frac{\epsilon(t) + \phi}{t}\right)\log(\bar{\lambda}_0 t) + \bar{\lambda}_0 +
\left(\bar{\lambda} + \frac{\epsilon(t)+\phi - \frac{1}{2}}{t}\right)\log(\bar{\lambda}t)
        -\bar{\lambda} + o_{a.s.}(1)\\
        =& -\bar{\lambda}\log(\bar{\lambda}_0 t) +
\bar{\lambda} \log(\bar{\lambda}t)
        -(\bar{\lambda}-\bar{\lambda}_0) +\frac{\epsilon(t)+\phi}{t}\log\frac{\bar{\lambda}}{\bar{\lambda}_0} - \frac{1}{2t}\log(\bar{\lambda}t) + o_{a.s.}(1)\\
               =& 
\bar{\lambda} \log\left(\frac{\bar{\lambda}}{\bar{\lambda}_0}\right)
        -(\bar{\lambda}-\bar{\lambda}_0) + o_{a.s.}(1)\\
    \end{split}
\end{equation*}
\end{proof}

\subsection{Proofs for section \ref{sec:testing}}

\subsubsection{Proof of Theorem \ref{thm:e_process}}
Recall $C^\alpha(t) = \{ (L^A, L^B) \in \Rp^2 :M(N^A(t),L^A;\phi)M(N^B(t),L^B;\phi) \leq \alpha^{-1}\}$.
Let $\Theta_0(t) = \{(x,x): x\in\mathbb{R}_{\geq 0}\}$. Claim:
\begin{equation*}
\begin{split}
        p(t)=&\inf \{\alpha : C^\alpha(t)\cap \Theta_0(t)\} = \emptyset\\
        =& 1/E(t)\\
       \text{where}\hspace{1cm} E(t)=& \prod_{i \in \{A, B\}} \frac{\phi^\phi}{(\phi + \hat{\Lambda}(t))^{\phi+N^i(t)}}\frac{\Gamma(\phi+N^i(t))}{\Gamma(\phi)}e^{\hat{\Lambda}(t)},\\
\end{split}
\end{equation*}
and $\hat{\Lambda}(t) = \frac{1}{2}(N^A(t) + N^B(t))$. 
\begin{proof}
Observe that 
\begin{equation*}
    E(t) = \underset{(L^A, L^B) \in \Theta_0}{\inf}M(N^A(t),L^A;\phi)M(N^B(t),L^B;\phi)=M(N^A(t),\hat{\Lambda}(t);\phi)M(N^B(t),\hat{\Lambda}(t);\phi)
\end{equation*}
as $\hat{\Lambda}(t) = \argmin_{L\in\mathbb{R}_{\geq 0}} M(N^A(t),L;\phi)M(N^B(t),L;\phi)$. This is easily verified by setting the first derivative to zero and solving for $x$. $(\hat{\Lambda}(t),\hat{\Lambda}(t))$ is therefore the ``last'' element in $\mathbb{R}_{\geq 0}^2$ to remain in $\Theta_0$ as the upper bound of $\alpha^{-1}$ is decreased. Formally:

We first show $p(t) \leq 1/E(t)$:

    $p(t) = \inf \{\alpha: C^\alpha(t) \cap \Theta_0 = \emptyset \}$ $\Rightarrow$ $\forall\alpha < p(t)$, $C^\alpha(t) \cap \Theta_0(t) \neq \emptyset$  $\Rightarrow$  $\forall\alpha < p(t)$, $\exists (L^A,L^B) \in \Theta_0$, such that $M(N^A(t),L^A;\phi)M(N^B(t),L^B;\phi) \leq \alpha^{-1}$  $\Rightarrow$ $\forall\alpha < p(t)$, $\underset{(L^A, L^B) \in \Theta_0}{\inf}M(N^A(t),L^A;\phi)M(N^B(t),L^B;\phi) \leq \alpha^{-1}$  $\Rightarrow$ $\forall\alpha < p(t)$, $\alpha \leq 1/E(t)$  $\Rightarrow$ $p(t) \leq 1/E(t)$.

Last, we show $p(t) \geq 1/E(t)$:

Conversely $p(t) = \inf \{\alpha: C^\alpha(t) \cap \Theta_0(t) = \emptyset \}$ $\Rightarrow$ $\forall\alpha > p(t)$, $C^\alpha(t) \cap \Theta_0(t) = \emptyset$ 
 $\Rightarrow$ $\forall\alpha > p(t)$, $\forall (L^A, L^B) \in \Theta_0$, $M(N^A(t),L^A;\phi)M(N^B(t),L^B;\phi) > \alpha^{-1}$ $\Rightarrow$ $\forall\alpha > p(t)$, $\underset{(L^A, L^B) \in \Theta_0}{\inf}M(N^A(t),L^A;\phi)M(N^B(t),L^B;\phi)  > \alpha^{-1}$, $\Rightarrow$ $\forall\alpha > p(t)$, $\alpha > 1/E(t) $, $\Rightarrow$ $p(t) \geq 1/E(t)$.
    \end{proof}

\subsubsection{Proof of Theorem \ref{thm:asymp_growth}}
\begin{proof}
Up to terms constant in $t$, $\log E(t)$ can be written as
\begin{equation*}
\begin{split}
     \log E(t) =& -(2\phi + N^A(t)+N^B(t))\log\left(\phi + \frac{1}{2}N^A(t) + \frac{1}{2}N^B(t)\right)\\
     &+\log\Gamma(N^A(t)+\phi)\\
     &+\log\Gamma(N^B(t)+\phi)\\
     &+N^A(t)+N^B(t)\\
     &+const.
\end{split}
\end{equation*}
Assumption $\Lambda^B(t)/t \rightarrow \bar{\lambda}^B$ implies $N^B(t)/t \rightarrow \bar{\lambda}^B$ almost surely by lemma \ref{lemma:stronglaw}, which we write as $N^B(t) = \bar{\lambda}^B t + \epsilon^B(t)$, where $\epsilon^B(t) = o_{a.s.}(t)$. Similarly for $N^A(t)= \bar{\lambda}^A t + \epsilon^A(t)$. Substituting these expressions for $N^A(t)$ and $N^B(t)$ and applying Stirling's approximation in lemma \ref{lem:stirling} yields
\begin{equation*}
\begin{split}
     \log E(t) =& -(\bar{\lambda}^A t+\bar{\lambda}^Bt+\epsilon^A(t)+\epsilon^B(t) + 2\phi)\log\left(\frac{1}{2}\bar{\lambda}^At + \frac{1}{2}\bar{\lambda}^Bt + \frac{1}{2}\epsilon^A(t)+\frac{1}{2}\epsilon^B(t)+\phi\right)\\
     &+(\bar{\lambda}^A t+\epsilon^A(t)+\phi-\frac{1}{2})\log(\bar{\lambda}^A t+\epsilon^A(t)+\phi)- (\bar{\lambda}^A t+\epsilon^A(t)+\phi)\\
     &+(\bar{\lambda}^B t+\epsilon^B(t)+\phi-\frac{1}{2})\log(\bar{\lambda}^B t+\epsilon^B(t)+\phi) - (\bar{\lambda}^B t+\epsilon^B(t)+\phi)\\
     &+\bar{\lambda}^A t+\epsilon^A(t)+\bar{\lambda}^B t+\epsilon^B(t)\\
     &+const.
\end{split}
\end{equation*}
Dividing by $t$ and cancelling terms yields
\begin{equation*}
\begin{split}
     \frac{\log E(t)}{t} =& -\left(\bar{\lambda}^A +\bar{\lambda}^B+\frac{\epsilon^A(t)+\epsilon^B(t) + 2\phi}{t}\right)\log\left(\frac{1}{2}\bar{\lambda}^At + \frac{1}{2}\bar{\lambda}^Bt + \frac{1}{2}\epsilon^A(t)+\frac{1}{2}\epsilon^B(t)+\phi\right)\\
     &+\left(\bar{\lambda}^A +\frac{\epsilon^A(t)+\phi-\frac{1}{2}}{t}\right)\log(\bar{\lambda}^A t+\epsilon^A(t)+\phi)\\
     &+\left(\bar{\lambda}^B +\frac{\epsilon^B(t)+\phi-\frac{1}{2}}{t}\right)\log(\bar{\lambda}^B t+\epsilon^B(t)+\phi)\\
     &+o_{a.s.}(1)
\end{split}
\end{equation*}
Replacing expressions like $\log(\bar{\lambda}^A t + \epsilon^A(t) + \phi)$ as $\log(\bar{\lambda}^A t) + o_{a.s.}(1)$ yields
\begin{equation*}
\begin{split}
     \frac{\log E(t)}{t} =& -\left(\bar{\lambda}^A +\bar{\lambda}^B+\frac{\epsilon^A(t)+\epsilon^B(t) + 2\phi}{t}\right)\log\left(\frac{1}{2}\bar{\lambda}^At + \frac{1}{2}\bar{\lambda}^Bt\right)\\
     &+\left(\bar{\lambda}^A +\frac{\epsilon^A(t)+\phi-\frac{1}{2}}{t}\right)\log(\bar{\lambda}^A t)\\
     &+\left(\bar{\lambda}^B +\frac{\epsilon^B(t)+\phi-\frac{1}{2}}{t}\right)\log(\bar{\lambda}^B t)\\
     &+o_{a.s.}(1)
\end{split}
\end{equation*}
The $\frac{\epsilon^A(t)}{t}\log t$ and $\frac{\epsilon^B(t)}{t}\log t$ terms cancel, with other terms being absorbed into the $o_{a.s.}(1)$ term, yielding
\begin{equation*}
\begin{split}
     \frac{\log E(t)}{t} =& -\left(\bar{\lambda}^A +\bar{\lambda}^B\right)\log\left(\frac{1}{2}\bar{\lambda}^A + \frac{1}{2}\bar{\lambda}^B\right)+\bar{\lambda}^A\log(\bar{\lambda}^A )+\bar{\lambda}^B\log(\bar{\lambda}^B )+o_{a.s.}(1)\\
=&\bar{\lambda}^A\log \frac{2\bar{\lambda}^A}{\bar{\lambda}^A+\bar{\lambda}^B} + \bar{\lambda}^B\log \frac{2\bar{\lambda}^B}{\bar{\lambda}^A+\bar{\lambda}^B}\hspace{1cm} +o_{a.s.}(1)
\end{split}
\end{equation*}

\end{proof}

\subsubsection{Proof of Theorem \ref{thm:asymp_growth2}}
\begin{proof}
Assumption $\Lambda^B(t)/t \rightarrow \bar{\lambda}^B$ implies $N^B(t)/t \rightarrow \bar{\lambda}^B$ almost surely by lemma \ref{lemma:stronglaw}, which we write as $N^B(t) = \bar{\lambda}^B t + \epsilon^B(t)$, where $\epsilon^B(t) = o_{a.s.}(t)$. Similarly for $N^A(t)= \bar{\lambda}^A t + \epsilon^A(t)$. Substituting these expressions for $N^A(t)$ and $N^B(t)$ and applying Stirling's approximation in lemma \ref{lem:stirling} yields

\begin{equation}
\begin{split}
        \frac{\log\tilde{E}(t)}{t} =& \left(\bar{\lambda}^A + \frac{\epsilon^A(t) + \beta}{t}\right)\log(\bar{\lambda}^A t + \epsilon^A(t) + \beta) \\
        &+ \left(\bar{\lambda}^B + \frac{\epsilon^B(t) + \alpha}{t}\right)\log(\bar{\lambda}^B t + \epsilon^B(t) + \alpha)\\
        &- \left(\bar{\lambda}^A+ \bar{\lambda}^B + \frac{\alpha+\beta+\epsilon^A(t)   + \epsilon^B(t)}{t}\right)\log(\alpha+\beta+\bar{\lambda}^A t + \epsilon^A(t) + \bar{\lambda}^B t + \epsilon^B(t))\\
        &+ (\bar{\lambda}^A + \bar{\lambda}^B)\log 2\\
        &+ o_{a.s.}(1).
\end{split}
\end{equation}
The first logarithmic expression can be written
\begin{equation*}
    \begin{split}
        \log(\bar{\lambda}^A t + \epsilon^A(t) + \beta) =& \log(\bar{\lambda}^A t) + \log\left(1 + \frac{\epsilon^A(t) + \beta}{\bar{\lambda}^A t}\right)\\
        =&\log(\bar{\lambda}^A t) + o_{a.s.}(1),
    \end{split}
\end{equation*}
and similarly for the second and third, yielding
\begin{equation}
\begin{split}
        \frac{\log\tilde{E}(t)}{t} =& \left(\bar{\lambda}^A + \frac{\epsilon^A(t) + \beta}{t}\right)\log(\bar{\lambda}^At) \\
        &+ \left(\bar{\lambda}^B + \frac{\epsilon^B(t) + \alpha}{t}\right)\log(\bar{\lambda}^B t)\\
        &- \left(\bar{\lambda}^A+ \bar{\lambda}^B + \frac{\alpha+\beta+\epsilon^A(t)   + \epsilon^B(t)}{t}\right)\log(\bar{\lambda}^A t+\bar{\lambda}^B t)\\
        &+ (\bar{\lambda}^A + \bar{\lambda}^B)\log 2\\
        &+ o_{a.s.}(1).
\end{split}
\end{equation}
The $\frac{\epsilon^A(t)}{t}\log t$ and $\frac{\epsilon^B(t)}{t}\log t$ terms cancel
\begin{equation*}
\begin{split}
        \frac{\log\tilde{E}(t)}{t} =& \left(\bar{\lambda}^A + \frac{\epsilon^A(t) + \beta}{t}\right)\log(\bar{\lambda}^A) \\
        &+ \left(\bar{\lambda}^B + \frac{\epsilon^B(t) + \alpha}{t}\right)\log(\bar{\lambda}^B)\\
        &- \left(\bar{\lambda}^A+ \bar{\lambda}^B + \frac{\alpha+\beta+\epsilon^A(t)   + \epsilon^B(t)}{t}\right)\log(\bar{\lambda}^A +\bar{\lambda}^B )\\
        &+ (\bar{\lambda}^A + \bar{\lambda}^B)\log 2\\
        &+ o_{a.s.}(1)\\
        =& \bar{\lambda}^A \log(\bar{\lambda}^A)+\bar{\lambda}^B\log(\bar{\lambda}^B)
        -\left(\bar{\lambda}^A+ \bar{\lambda}^B\right)\log(\bar{\lambda}^A+\bar{\lambda}^B)+ (\bar{\lambda}^A + \bar{\lambda}^B)\log 2 + o_{a.s.}(1)\\
         =&\bar{\lambda}^A\log \frac{2\bar{\lambda}^A}{\bar{\lambda}^A+\bar{\lambda}^B} + \bar{\lambda}^B\log \frac{2\bar{\lambda}^B}{\bar{\lambda}^A+\bar{\lambda}^B}+o_{a.s.}(1)
\end{split}
\end{equation*}

\end{proof}

\subsubsection{Proof of Theorem \ref{thm:asymp_growth3}}
\begin{proof}
    Assumption $\Lambda^B(t)/t \rightarrow \bar{\lambda}^B$ implies $N^B(t)/t \rightarrow \bar{\lambda}^B$ almost surely by lemma \ref{lemma:stronglaw}, which we write as $N^B(t) = \bar{\lambda}^B t + \epsilon^B(t)$, where $\epsilon^B(t) = o_{a.s.}(t)$. Similarly for $N^A(t)= \bar{\lambda}^A t + \epsilon^A(t)$, where $\epsilon^A(t) = o_{a.s.}(t)$.

\begin{equation*}
    \begin{split}
        \frac{\log E^A(t)}{t} &= -\frac{1}{2t}\log(\phi + t) + \frac{1}{2}\frac{1}{\phi + t}\frac{(\bar{\lambda}^B t - \bar{\lambda}^A t +\epsilon^B(t) - \epsilon^A(t))^2}{\bar{\lambda}^B t + \bar{\lambda}^A t +\epsilon^B(t) + \epsilon^A(t)} + const.\\
        &=o_{a.s.}(1) + \frac{1}{2}\frac{1}{1 + \phi / t}\frac{(\bar{\lambda}^B  - \bar{\lambda}^A  + o_{a.s.}(1))^2}{\bar{\lambda}^B + \bar{\lambda}^A + o_{a.s.}(1)}\\
        &=\frac{1}{2}\frac{(\bar{\lambda}^B  - \bar{\lambda}^A)^2}{\bar{\lambda}^B + \bar{\lambda}^A} + o_{a.s.}(1)\\
    \end{split}
\end{equation*}
\end{proof}

\section{Numerical Computation of Confidence Processes}
\subsection{Numerical Computation of Confidence Processes for $\lambda^A(t)$ and $\lambda^B(t)$}
We present the numerical method for computing the confidence process of $\Lambda^B(t)$ only. Recall from equation \eqref{eq:joint_confidence} that the set $C^\alpha(t)$ comprises the set of elements $(x_1, x_2) \in \mathbb{R}^2_{\geq 0}$ satisfying
\begin{equation*}
    \frac{\phi^\phi}{(\phi + x_2)^{\phi+N^B(t)}}\frac{\Gamma(\phi+N^B(t))}{\Gamma(\phi)}e^{x_2} \leq \alpha^{-1} \left(\frac{\phi^\phi}{(\phi + x_1)^{\phi+N^A(t)}}\frac{\Gamma(\phi+N^A(t))}{\Gamma(\phi)}e^{x_1}\right)^{-1}.
\end{equation*}
The term on the right-hand side of the inequality obtains a maximum when $x_1 = N^A(t)$.
Define a constant
    \begin{equation*}
        k^A(t) = \frac{1}{\alpha\phi^{2\phi}}\frac{(\phi + N^A(t))^{\phi+N^A(t)}}{e^{N^A(t)} }\frac{\Gamma(\phi)}{\Gamma(\phi+N^A(t))}\frac{\Gamma(\phi)}{\Gamma(\phi+N^B(t))},
    \end{equation*}
and a function
\begin{equation*}
                    f^B(x) =\frac{e^x}{(\phi + x)^{\phi+N^B(t)}}- K^A(t).\\
\end{equation*}
Then $\{y > 0 : (x,y) \in C^\alpha(t) \text{ for some } x\} = \{y : f^B(y) \leq 0 \}$. The extrema of the interval defining the confidence process for $\Lambda^B(t)$ are simply the roots of $f^B(x)$. The result for $\Lambda^A(t)$ is similar.

\subsection{Numerical Computation of Confidence Processes for $\Lambda^B(t) - \Lambda^A(t)$}
The transformed confidence set is given by
\begin{equation*}
    T(C^\alpha(t)) = \left\{(w, v) \in \mathbb{R}\times\Rp : g(w,v) \leq -\log \alpha\right\},
\end{equation*}
where
\begin{equation*}
\begin{split}
    g(w,v) =& v - (\phi+N^B(t))\log\left(\phi + \frac{1}{2}\left(v+w\right)\right)- (\phi+N^A(t))\log\left(\phi + \frac{1}{2}\left(v-w\right)\right)\\
     &+ 2\phi \log \phi - 2 \log \Gamma(\phi) + \log \Gamma(\phi + N^A(t)) + \log \Gamma(\phi + N^B(t) ).\\
\end{split}
\end{equation*}
The interval for $\Lambda^B(t) - \Lambda^A(t)$ is the projection $\{w : (w,v) \in T(C^\alpha(t)) \text{ for some } v\}$, and we seek the lower and upper bounds that define this interval. These can be obtained by maximizing and minimizing the function $f(w,v) = w$ over $T(C^\alpha(t))$. Let $(w_u, v_u)$ and $(w_l, v_l)$ denote the solutions to the maximization and minimization problems respectively. Now consider the two equations
\begin{equation}
\label{eq:optimization_equations}
\begin{split}
        g(w, v) &= \alpha^{-1},\\
        \frac{\partial g(w,v)}{\partial v} &= 0
\end{split}
\end{equation}
The second equation allows $v$ to be expressed in terms of $w$ as
\begin{equation*}
        v = h(w) = \frac{1}{2}(N^A(t) + N^B(t)) - \phi + \sqrt{a + b},
\end{equation*}
where
\begin{equation*}
\begin{split}
    a &= \frac{1}{4}N^A(t)^2 + \frac{1}{2}N^A(t)N^B(t) + N^A(t)\phi + N^A(t)w, \\
    b &= \frac{1}{4}N^B(t)^2 + N^B(t)\phi - N^B(t)w + \phi^2 + w^2. \\
\end{split}
\end{equation*}
The solution for $w$ can then be found by solving $g(w, h(w)) = - \log(\alpha)$.

The solutions to the optimization problem $(w_l, v_l)$ and $(w_u, v_u)$ must satisfy equations \eqref{eq:optimization_equations} as the solutions live on the boundary of $T(C^\alpha(t))$, satisfying the first equation, and $\nabla g$ has zero $v$ component, satisfying the second equation. The values $w_l$ and $w_u$ are then the two roots of the function $g(w, h(w)) + \log \alpha $.

\section{Supplemental Figures}
Figure \ref{fig:intensity} shows the intensity function and the realized point process from example \ref{ex:single_process}. Figure \ref{fig:intensities} shows the two intensity functions and realized point processes from example \ref{ex:two_processes}.
\begin{figure}[h]
    \centering
    \includegraphics[width=\linewidth]{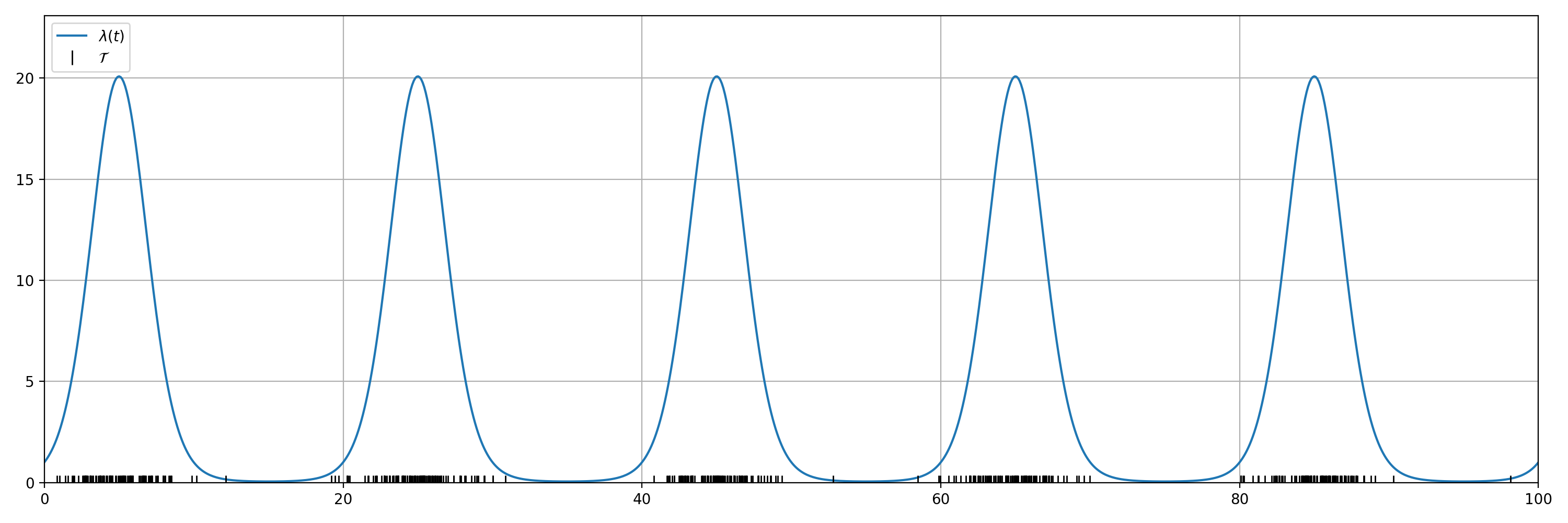}
    \caption{Intensity function $\lambda(t)$ (blue) with realized point process $\mathcal{T}$ (black ticks)}
    \label{fig:intensity}
\end{figure}

\begin{figure}[h]
    \centering
    \includegraphics[width=\linewidth]{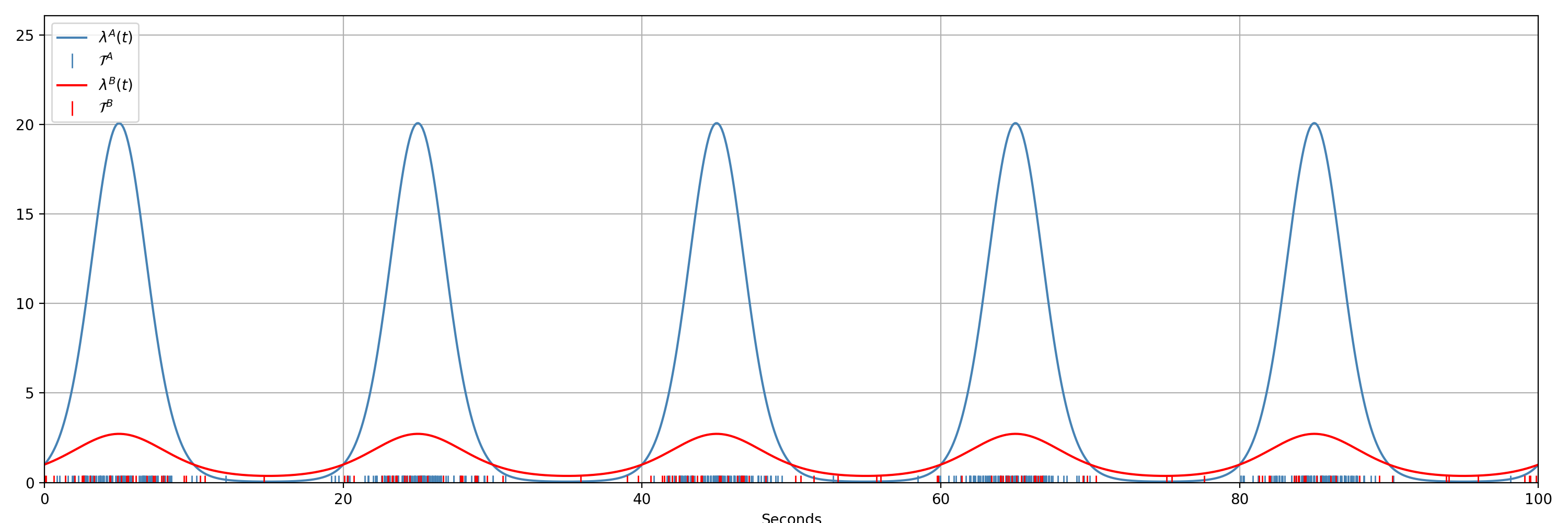}
    \caption{$\mathcal{T}^A \sim \mathcal{P}(\lambda^A)$ and $\mathcal{T}^B \sim \mathcal{P}(\lambda^B)$ with $\lambda^A(t) =e^{3\sin(2\pi t/20)}$ and $\lambda^B(t) =e^{2\sin(2\pi t/20)}$}
    \label{fig:intensities}
\end{figure}

    \end{document}